\newcommand{\BR}{\mathbb{R}}
\newcommand{\mX}{{\cal X}}
\newcommand{\mB}{{\cal B}}
\def\boxit#1{\vbox{\hrule\hbox{\vrule\kern6pt
          \vbox{\kern6pt#1\kern6pt}\kern6pt\vrule}\hrule}}
\newtheorem{theorem}{Theorem}[section]
\newtheorem{lemma}{Lemma}[section]
\newtheorem{corollary}{Corollary}[section]
\begin{document}

\thispagestyle{empty}
\title{Bayesian Fusion Estimation via t-Shrinkage}
\author{Qifan Song and Guang Cheng
\thanks{Qifan Song is an assistant professor in Department of Statistics, Purdue University. Email: qfsong@purdue.edu;
Guang Cheng is a professor in Department of Statistics, Purdue University. Email: chengg@purdue.edu.}
}

\date{}

\maketitle

\begin{abstract}
Shrinkage prior has gained great successes in many data analysis, however, its applications mostly focus on the Bayesian modeling of  sparse parameters. In this work, we will apply Bayesian shrinkage to model high dimensional parameter that possesses an unknown blocking structure. We propose to impose heavy-tail shrinkage prior, e.g., $t$ prior, on the differences of successive parameter entries, and such a fusion prior will shrink successive differences towards zero and hence induce posterior blocking. Comparing to conventional Bayesian fused lasso which implements Laplace fusion prior, $t$ fusion prior induces stronger 
shrinkage effect and enjoys a nice posterior consistency property.
Simulation studies and real data analyses show that $t$ fusion has superior performance to the frequentist fusion estimator and Bayesian Laplace-fusion prior.
This $t$-fusion strategy is further developed to conduct a Bayesian clustering analysis, and simulation shows that the proposed algorithm obtains better posterior distributional convergence 
than the classical Dirichlet process modeling.
\end{abstract}

\noindent
{\bf  Keywords:}
$t$-shrinkage prior; Bayesian fusion; Bayesian clustering; posterior consistency

\section{Introduction}\label{intro}
High dimensionality plays an important role in modern statistical applications such as genomics, image processing, finance and etc. An overview for the development of high dimensional analysis can be found in \cite{van2011statistics} and references therein. To overcome ill-posed problems that involve high dimensional parameters, one usually assumes that the true 
parameter value lies in a low dimensional subspace. To obtain such low dimensional estimation, the idea of regularization is commonly used, via penalized likelihood approaches or using informative prior specifications. Various penalty functions have been proposed for consistent frequentist estimation, including Lasso \cite{Tibshirani1996}, SCAD \cite{FanL2001}, adaptive Lasso \cite{Zou2006} and MCP \cite{Zhang2010}. For high dimensional Bayesian inferences, sparsity induced prior, such as spike-and-slab prior \cite{Jiang2007,LiangSY2013, NarisettyH2014,SongL2014, YangWJ2015,CastilloSHV2015,ScottB2010,JohnsonR2012}, is widely
used for model selection. 
Fused Lasso \cite{tibshirani2005sparsity} considers another type of low dimensional embedding of a high dimensional parameter $\theta=(\theta_i)_{i=1}^p$ where the successive differences $\vartheta_i = \theta_{i}-\theta_{i-1}$ are assumed to be sparse as well, in other words, there exists a consecutive block partition of $\theta_i$'s, such that $\theta_i$'s are constant within each block.
Fused lasso method proposes a penalty function $\lambda_1\sum|\theta_i|+\lambda_2\sum |\vartheta_i|$
which consists of two terms that encourage sparsity among $\theta_i$'s and $\vartheta_i$'s respectively.

In this work, we consider the following Gaussian mean problem:
\begin{equation}\label{model}
 y_i=\theta_i^*+\varepsilon_i
\end{equation}
where $\varepsilon_i$'s are iid normal error with unknown variance $\sigma^2$. Similarly to Fused Lasso applications, we also assume true parameter $\theta^*$ is blocky in the sense that 
there exists a partition $\{\mB_1^*,\dots, \mB_{s}^*\}$
of $\{1,...,n\}$ such that $\theta^*_i$'s are constant
for all $i\in\mB_k^*$. Correspondingly, we define set $G^*=\{2\leq i\leq n: \vartheta_i^*:=\theta_i^*-\theta_{i-1}^*\neq 0 \}$ whose number of elements is supposed to be much smaller than $n$.
We are interested in conducting Bayesian structure recovery of $\theta^*$.
Motivated by the $L_1$-fusion penalty used by Fused lasso estimator \cite{tibshirani2005sparsity}, as well as the development of the Bayesian lasso \cite{ParkC2008}, \cite{kyung2010penalized} introduced Bayesian fused lasso by imposing independent Laplace priors on all successive differences. 
The implementation of Laplace shrinkage prior can significantly reduce the posterior sampling costs comparing to spike-and-slab modeling, and conceptually, the Laplace prior can be nicely interpreted as a Bayesian counterpart of $L_1$ penalty.
However, many recent Bayesian theoretical developments show that
in the context of sparse linear regression models,
Laplace prior fails to
achieve satisfactory posterior contraction \cite{CastilloSHV2015, BhattacharyaPPD2015, SongL2017}.
It is believed that the posterior inconsistency of Laplace prior is due to its exponentially light tail, and
\cite{SongL2017} suggests to use heavy tail prior distribution for sparse linear regression models, which can induce sufficient Bayesian shrinkage effect and thereafter guarantee to recover the sparsity structure.

We find that the above phenomenon holds for Bayesian fusion estimation as well: imposing Laplace prior on $(\theta_i-\theta_{i-1})$ leads to a smoothly varying $\theta_i$ estimation rather than a blocky $\theta$, thus it fails to identify the blocking structure.
Therefore, in this paper, we propose to use independent student-$t$ priors on successive differences $\theta_i-\theta_{i-1}$ for a Bayesian fusion problem. Our results show that such a simple $t$ fusion Bayesian modeling leads to very accurate posterior estimation. More importantly, comparing with Laplace prior or frequentist $L_1$ penalization,
its performance on detecting the blocking structure is much better. The asymptotic posterior convergence induced by $t$ fusion prior is investigated as well. A related Bayesian work is \cite{shimamura2018bayesian} who proposed to use a Normal-Exponential-Gamma (NEG) prior for the successive differences. However, their Bayesian inference is only based on the maximum a posterior (MAP) estimator, while our application tries to fully utilize the whole posterior distributional information.

Furthermore, we consider a practically useful extension to Bayesian fusion estimation. Instead of assuming that $\theta^*$ has a {\em consecutive} blocking structure, it is more realistic to assume that $\theta^*$ possesses an unknown clustering structure. In other words, some unnecessarily consecutive $\theta_i^*$'s share the same value within an $\mB_s^*$.
In a broader scope, such a clustering problem can viewed as a simplest example of 
subgroup analysis where we assume that a subject-related parameter $\theta$ follows an unknown
grouping structure. For example, in clinical trial studies, the treatment effects may vary across different subpopulations,
but remain the same for the patients belonging to the same subpopulation.
If one can correctly identify the subpopulation structure, then specific medical therapies can be prescribed for each subpopulation to maximize 
the treatment effectiveness. 
The existing Bayesian clustering analysis \cite{wade2015bayesian,heller2005bayesian,Mozeika2018,berger2014bayesian} usually impose discrete priors on the clustering structure, along with a conditional prior on $\theta$ given specific clustering structure. In contrast, we propose to directly model the parameter $\theta$ via $t$ fusion prior.
Such a prior specification allows a computationally efficient Gibbs posterior sampling algorithm. Our simulation shows that our procedure yields reasonable cluster structure recovery, and moreover it beats the usual Dirichlet process prior in terms of posterior contraction.

This paper is organized as follows.
In Section \ref{secmain}, we study the Bayesian fusion problem with $t$ prior specification. We will present the posterior asymptotic result, and discuss its difference from the Laplace prior.
In Section \ref{secada}, we will use the $t$-fusion prior to solve the clustering problem. 
Several simulation studies and one real data application are presented 
in Section \ref{simu}.
At last, Section \ref{end} provides more discussions and remarks.
All technical proofs are postponed to the appendix section.

Throughout this work, the following notation is used.
Given two positive sequences $\{a_n\}$ and $\{b_n\}$, 
$a_n\succ b_n$ means $\lim(a_n/b_n)=\infty$ and $a_n\asymp b_n$ means $-\infty<\lim\inf(a_n/b_n)\leq\lim\sup(a_n/b_n)<\infty$.
$\|x\|$ and $\|x\|_1$ denote $L_2$ and $L_1$ norms of vector $x$.

\section{Bayesian fusion via $t$-shrinkage}\label{secmain}
\subsection{Bayesian Modeling}\label{secmodel}
Suppose we observe independent data $\{y_i: i=1,\dots,n\}$ following model (\ref{model}). The indexing of the data has certain practical or scientific meaning, under which we can assume that the parameter vector
$\theta^*$ is ``stepwise", in the sense that most of the successive differences
$\vartheta_i=\theta_i-\theta_{i-1}$ are exactly 0.
To induce the sparsity for both $\theta_i$'s and $\vartheta_i$'s, \cite{tibshirani2005sparsity} proposed the following fused lasso estimator
\[
\widehat\theta^{\rm FL}=\arg\min\left(\frac{\|y-\theta\|^2}{2}+\lambda_1\sum_{i=1}^n|\theta_i|+\lambda_2\sum_{i=2}^{n}|\vartheta_i|\right).
\]
If one is not interested in pursuing the sparsity of $\theta$'s, then a fusion estimator \cite{rinaldo2009properties} can be used
\begin{equation}\label{fusest}
\widehat\theta^{\rm F}=\arg\min\left(\frac{\|y-\theta\|^2}{2}+\lambda\sum_{i=2}^{n}|\vartheta_i|\right)=\arg\min\left(\frac{\|y-\theta\|^2}{2}+\lambda\sum_{i=2}^{n}|\theta_{i}-\theta_{i-1}|\right),
\end{equation}
for some tuning parameter $\lambda$. The above objective functions are both convex and fast computation
algorithms are developed, e.g. \cite{liu2010efficient, tibshirani2005sparsity}. The penalty term $\lambda\sum_{i=2}^{n}|\theta_{i}-\theta_{i-1}|$ is can be interpreted as 
the negative logarithm of prior density used for Bayesian inferences, therefore, a nature Bayesian expansion to (\ref{fusest}) is Laplace (double exponential) prior modeling \cite{kyung2010penalized,shimamura2018bayesian}. To account for the unknown variance parameter $\sigma^2$ and $\theta_1$, a convenient prior specification could be 
\begin{equation}\label{Lprior}
\begin{split}
\sigma^2&\sim \mbox{Inverse-Gamma}(a_\sigma,b_\sigma),\quad 
\theta_1|\sigma^2\sim N(0, \sigma^2\lambda_1),\\
(\theta_{i}-\theta_{i-1})|\sigma^2 &\sim \mbox{Laplace}(\lambda/\sigma), \mbox{ for all }i=2,\dots, n
\end{split}
\end{equation}
where Laplace($a$) denotes the distribution with cdf $f(x)\propto \exp(-a|x|)$.

According to \cite{andrews1974scale}, the above Laplace prior $\mbox{Laplace}(\lambda/\sigma)$ can be rewritten as a scale mixture of normal distributions:
\[
(\theta_{i}-\theta_{i-1})|\sigma^2,\lambda_i \sim N(0, \lambda_i\sigma^2),\quad
\lambda_i\sim \mbox{exp}(-\lambda^2/2),
\]
where \mbox{exp}$(a)$ denotes the exponential distribution $f(x)\propto \exp(-ax)$.
This hierarchical representation for Laplace prior leads to a Gibbs sampling 
update that is similar to the Bayesian lasso \cite{ParkC2008}:
\begin{equation}\label{gibbs1}
\begin{split}
\lambda_i^{-1}\sim  &\mbox{Inverse-Gaussian}(\lambda\sigma/|\theta_i-\theta_{i-1}|, \lambda^2) \mbox{ for all }i=2,\dots, n,\\
\sigma^2\sim&\mbox{Inverse-Gamma}
\left(a_\sigma+n,b_\sigma+\frac{\|y-\theta\|^2}{2}+\frac{\theta_1^2}{2\lambda_1}
+\sum_{i=2}^{n}\frac{(\theta_i-\theta_{i-1})^2}{2\lambda_i}\right),\\
\theta_i\sim&N( \mu_i,\nu_i),
\end{split}
\end{equation}
where 
$\nu_i^{-1}=1/\sigma^2+1/\lambda_{i+1}\sigma^2+1/\lambda_i\sigma^2$ and 
$\mu_i = \nu_i(y_i/\sigma^2+\theta_{i+1}/\lambda_{i+1}\sigma^2+\theta_{i-1}/\lambda_i\sigma^2)$, $\lambda_{n+1}$ and $\lambda_{0}$ are considered to be infinite, and 
$\mbox{Inverse-Gaussian}(a,b)$ denotes inverse Gaussian distribution with cdf
$f(x)\propto x^{-3/2}\exp[-b(x-a)^2/(2a^2x)]$.

Despite the popularity of Laplace prior in many applications, recent Bayesian works \cite{CastilloSHV2015, BhattacharyaPPD2015, SongL2017} point 
out that, if we impose independent
$\theta_i|\sigma^2\sim \mbox{Laplace}(\lambda/\sigma)$ in high dimensional sparse regression under Laplace priors, then the induced posterior has only a sub-optimal contraction rate, or even diverges.
In other words, the posterior distribution of 
$\theta$ doesn't contract into a small neighborhood around true value $\theta^*$ appropriately. For a blocky parameter $\theta$, we observe similar empirical results,
as showed in the toy example in Section \ref{sectoy}: the Laplace fusion prior fails to
shrink the observations, which belongs to the same block, towards a same value. Hence, the resultant Bayesian estimate of $\theta$ doesn't have a step-wise pattern at all.

Following the theoretical discovery of \cite{SongL2017}, we consider using a class of heavy tailed 
priors for the successive differences $\vartheta_i$'s. Specifically, this work will
assign an $t$-shrinkage prior:
\begin{equation}\label{tprior}
\begin{split}
\sigma^2&\sim \mbox{Inverse-Gamma}(a_\sigma,b_\sigma),\quad 
\theta_1|\sigma^2\sim N(0, \sigma^2\lambda_1),\\
(\theta_{i}-\theta_{i-1})|\sigma^2 &\sim t_{df}(s\sigma), \mbox{for all }i=2,\dots, n
\end{split}
\end{equation}
where $t_{a}(b)$ denotes $t$-distribution with degree of freedom $a$ and scale parameter $b$. Note that the above $t$ distribution can be rewritten as an inverse-gamma scaled Gaussian mixture as
\[
(\theta_{i}-\theta_{i-1})|\sigma^2,\lambda_i \sim N(0, \lambda_i\sigma^2),\quad
\lambda_i\sim \mbox{Inverse-Gamma}(a_t, b_t),
\]
where $a_t$, $b_t$ satisfy $df=2a_t$ and $s=\sqrt{b_t/a_t}$.
Under this $t$-prior, the posterior distribution still allows 
a full conditional Gibbs sampler, where the update for $\theta_i$'s
and $\sigma^2$ are exactly the same as in (\ref{gibbs1}) and the
update of $\lambda_i$'s follows
\[
\lambda_i\sim  \mbox{Inverse-Gamma}\left(a_t+1/2,b_t+\frac{(\theta_i-\theta_{i-1})^2}{2\sigma^2}\right)
\mbox{ for all }i=2,\dots, n.
\]

To further understand the difference between the Laplace fusion prior and 
$t$ fusion prior, we compare their conditional prior 
$\pi(\theta_i|\theta_{i-1},\theta_{i+1}, \sigma)$. Figure \ref{priorshape}
plots the function $-\log[\pi(\theta_i|\theta_{i-1}=-1,\theta_{i+1}=1, \sigma=1)]$, up to a constant, for both prior specifications.
It is clear to see that the conditional $t$-fusion prior allocates most
of its prior mass at the two small neighborhoods centered at 
$\theta_{i-1}$ and $\theta_{i+1}$, given a sufficiently small scale parameter
$s$. In other word, the prior introduces a strong shrinkage effect on $\theta_i$, towards either $\theta_{i-1}$ or $\theta_{i+1}$. 
Therefore, for all $i=2,\dots,n$, $\theta_i$ will merge with either $\theta_{i-1}$ or $\theta_{i+1}$ in the posterior distribution, which thereafter induces a posterior blocking structure.
On the other hand, the conditional Laplace-fusion prior has a uniform
prior density within the interval $[\theta_{i-1},\theta_{i+1}]$.
Hence, it doesn't encourage the posterior of $\theta_i$ to be grouped with either $\theta_{i-1}$ or $\theta_{i+1}$.
It is worth to mention that the NEG-fusion prior \cite{shimamura2018bayesian}
also has a similar plot pattern for its conditional prior density function. 
But a critical difference between $t$ prior and NEG prior, is that the density for NEG prior is non-differentiable at 0. Therefore, the MAP of NEG-fusion prior possesses an exact blocking structure, and \cite{shimamura2018bayesian} only use this MAP for Bayesian inferences rather than the whole posterior distributional information.
But the $t$ prior is continuous differentiable everywhere (the functions displayed in the upper plot of Figure \ref{priorshape} is actually smooth at -1 and 1). Hence, its MAP doesn't have blocky structure, and in this work we will utilize all the posterior samples for the Bayesian analysis.

\begin{figure}[htb]
	\begin{center}
		\includegraphics[width=8cm]{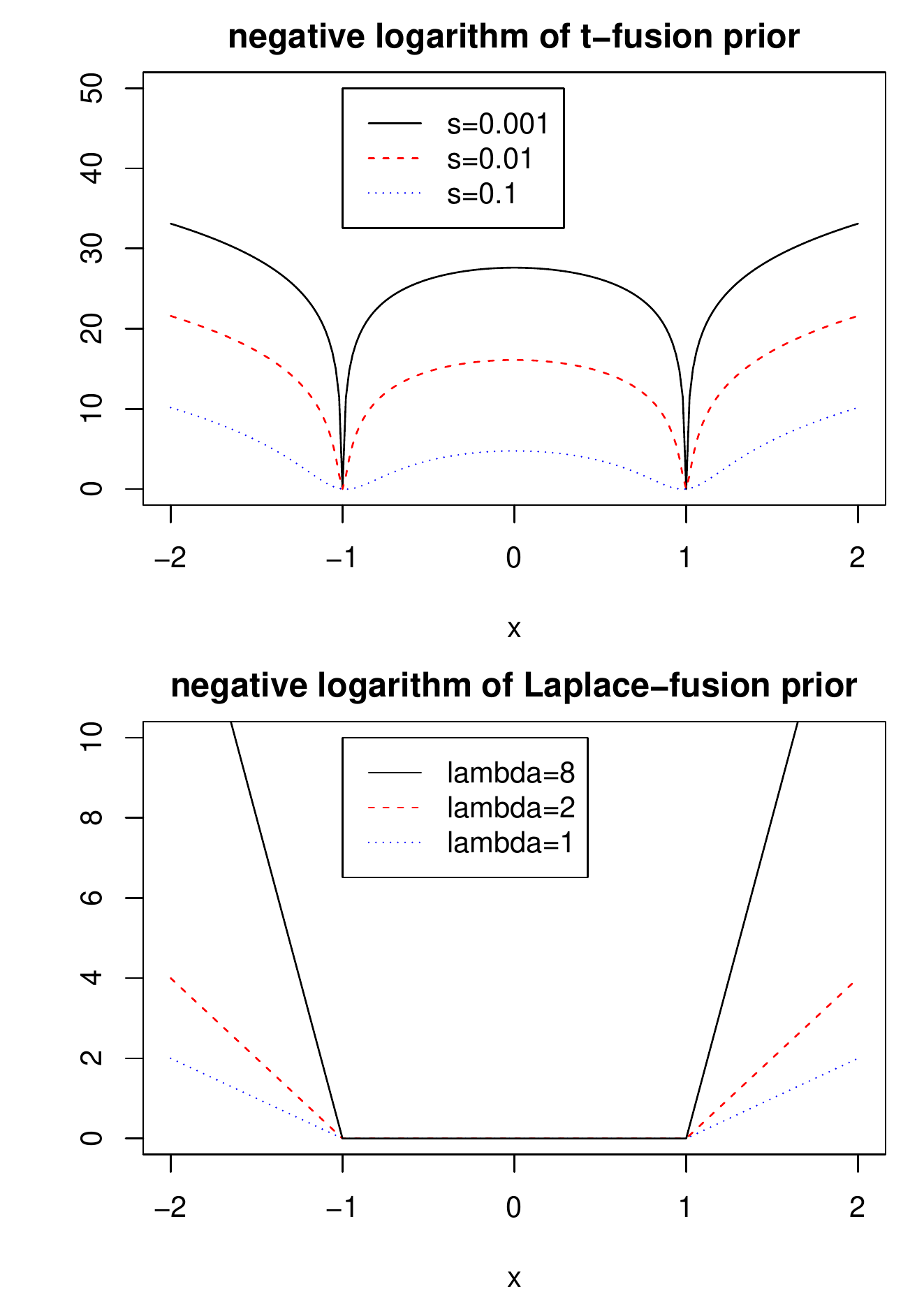}\vskip -0.3in
		\caption{
		The negative logarithm of conditional prior  $-\log[\pi(\theta_i|\theta_{i-1}=-1,\theta_{i+1}=1, \sigma=1)]$ under
		difference hyperparameter values.
		}\label{priorshape}
	\end{center}
\end{figure}

\subsection{Posterior contraction of Bayesian $t$-fusion}
In this section, we study the theoretical performance induced by the $t$-fusion prior specification (\ref{tprior}). Our theoretical investigation follows the framework of \cite{SongL2017}, which studies the posterior convergence rate of coefficient $\beta$ in high dimensional sparse regression models $y=X\beta+\varepsilon$.
Note the model (\ref{model}) can also be represented as a sparse linear regression, where the design matrix
$X$ is a $n$ by $n$ matrix whose lower triangle entries are all 1 and $\beta=(\theta_1,\vartheta_2,\dots,\vartheta_n)$ is a unknown sparse vector.
The following theorem studies the general posterior convergence properties given an independent prior over all $\vartheta_i$'s.
\begin{theorem}[Posterior consistency]\label{thm1}
Assume that $|G^*|\prec n/\log(n)$, and prior specification follows that 
$\sigma^2\sim \mbox{Inverse-Gamma}(a_\sigma,b_\sigma)$, $\theta_1$ and $\vartheta_i$'s are conditionally independent given $\sigma^2$ with prior density $\pi(\theta_1,\vartheta_i's|\sigma)\propto (1/\sigma)^nf_\theta(\theta_1/\sigma)\prod_{i=2}^n f_\vartheta(\vartheta_i/\sigma)$.
Furthermore, if
\begin{equation}\label{cond1}
    \begin{split}
    &\int_{-|G^*|\log(n)/n^2}^{|G^*|\log(n)/n^2}f_\vartheta(x)dx\geq 1- n^{1+u},\mbox{  for some $u>0$, }\\
    &-\log(\underline\pi_\vartheta)=O(\log n),\quad \mbox{ where } 
         \underline\pi_\vartheta = \min_{|x|\leq \max_i|\vartheta^*_i/\sigma^*|+1} f_\vartheta(x),\\
    &-\log(\underline\pi_\theta)=O(|G^*|\log n),\quad \mbox{ where } 
         \underline\pi_\theta = \min_{|x|\leq |\theta^*_1/\sigma^*|+1} f_\theta(x),\\
    &a_\sigma\log(1/b_\sigma)+b_\sigma/\sigma^{*2}+(a_\sigma+2)\log(\sigma^{*2})=O(|G^*|\log n),
    \end{split}
\end{equation}
then there exist a constant $M$, and $\epsilon_n\asymp \sqrt{|G^*|\log n/n}$, such that the posterior distribution satisfies
\[
\pi(\|\theta-\theta^*\|\geq M\sigma^*\sqrt{n}\epsilon_n|y ) \rightarrow 0,
\]
where the convergence holds in probability or in $L_1$ w.r.t. the probability measure of $y$.
\end{theorem}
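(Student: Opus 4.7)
The plan is to recast the fusion model as a sparse linear regression and apply the posterior contraction framework of \cite{SongL2017}. Setting $\beta_1=\theta_1$ and $\beta_i=\vartheta_i$ for $i\geq 2$ gives $\theta=X\beta$ with $X$ the $n\times n$ lower-triangular matrix of ones, so model (\ref{model}) becomes $y=X\beta+\varepsilon$ with $\beta^*$ having at most $|G^*|+1$ nonzero coordinates. Because the target quantity $\|\theta-\theta^*\|^2=\|X(\beta-\beta^*)\|^2$ is (up to a constant) the squared Hellinger distance between the two Gaussian likelihoods at $(\theta,\sigma^{*2})$ and $(\theta^*,\sigma^{*2})$, no restricted-eigenvalue hypothesis on $X$ is required: a Ghosal--Ghosh--van der Vaart argument at the level of the in-sample prediction error is sufficient.

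I would then execute the standard three-step recipe. \textbf{(i) Prior concentration.} Consider the event that $\sigma^2$ lies in an $\epsilon_n^2$-neighborhood of $\sigma^{*2}$, that $|\theta_1-\theta_1^*|$ and each $|\vartheta_i-\vartheta_i^*|$ for $i\in G^*$ are at most of order $\sigma^*/n$, and that $|\vartheta_i|\leq |G^*|\log(n)/n^2$ for $i\notin G^*$. A triangle-inequality bound gives $\|X(\beta-\beta^*)\|^2\leq n\|\beta-\beta^*\|_1^2=O((|G^*|\log n)^2/n)\ll n\epsilon_n^2$ on this event, so it is contained in a KL-ball of radius $n\epsilon_n^2$. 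Independence of the prior lets its mass factorize, and the four conditions in (\ref{cond1}) bound the individual factors: the last three pointwise on $\sigma^2$, $\theta_1$, and $\vartheta_i$ for $i\in G^*$, and the first via $\prod_{i\notin G^*}\pi(|\vartheta_i|\leq|G^*|\log n/n^2)\geq(1-n^{-(1+u)})^{n}\to 1$. Combining them yields prior mass at least $\exp\{-Cn\epsilon_n^2\}$.

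\textbf{(ii) Sieve and (iii) tests.} Take
\begin{equation*}
\mathcal{F}_n=\{(\theta,\sigma^2):\sigma^2\in[\underline{\sigma}_n^2,\overline{\sigma}_n^2],\ \#\{i:|\vartheta_i|>\tau_n\}\leq k_n,\ \|\theta\|_\infty\leq R_n\},
\end{equation*}
with $k_n\asymp|G^*|$ and $\tau_n$ small. The Inverse-Gamma tails control the $\sigma^2$ part, and a binomial union bound based on the first condition of (\ref{cond1}) controls the number of excursions above $\tau_n$, so that $\pi(\mathcal{F}_n^c)\leq\exp\{-(C+2)n\epsilon_n^2\}$. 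On $\mathcal{F}_n$ the effective parameter dimension is $O(|G^*|)$, so standard likelihood-ratio tests against the shell $\{M\sigma^*\sqrt{n}\epsilon_n\leq\|\theta-\theta^*\|\leq 2M\sigma^*\sqrt{n}\epsilon_n\}$, combined with a union bound over an $\epsilon_n$-net of size $\exp\{O(n\epsilon_n^2)\}$, deliver type I and type II errors of order $\exp\{-cM^2 n\epsilon_n^2\}$. Plugging (i)--(iii) into the standard posterior-contraction lemma and taking $M$ large gives the conclusion.

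The main obstacle -- and the reason conditions (\ref{cond1}) look as they do -- is the simultaneous calibration required in step (i): the prior must put essentially all its mass on $|\vartheta_i|\leq|G^*|\log n/n^2$ at each of the $n-|G^*|$ ``zero'' coordinates (so the product over the complement of $G^*$ stays bounded below) while still putting polynomial-in-$1/n$ mass on a neighborhood of each true jump $\vartheta_i^*/\sigma^*$ of order $1$. A Laplace prior scaled small enough to meet the first demand kills the second through its exponential tail; a $t$-prior satisfies both simultaneously via its polynomial tail, which is the quantitative phenomenon the theorem is designed to capture.
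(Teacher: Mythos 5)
Your overall architecture---a prior-mass lower bound for the marginal likelihood, a low-prior-probability exceptional set controlling how many increments are non-negligible, and exponentially consistent tests against the rest of the complement---is the same Barron/Ghosal--Ghosh--van der Vaart recipe that the paper runs through Lemma \ref{lemmab}. Your reading of the four conditions in (\ref{cond1}) also matches the paper's computations exactly: the first condition feeds a binomial tail bound (Lemma \ref{lemmad}) on the number of null increments exceeding the threshold, which is the paper's set $B_n$ and your sieve constraint $\#\{i:|\vartheta_i|>\tau_n\}\leq k_n$; the other three give $\exp\{-O(|G^*|\log n)\}$ prior mass near $(\theta_1^*,\{\vartheta_i^*\}_{i\in G^*},\sigma^{*2})$, which is the paper's bound (\ref{md}). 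Your observation that no restricted-eigenvalue condition is needed because the loss is the prediction norm $\|X(\beta-\beta^*)\|$ is also correct and implicit in the paper.

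The one step that does not close as written is the test construction. You truncate the sieve by $\|\theta\|_\infty\leq R_n$ and then cover it with a fixed-resolution net of cardinality $\exp\{O(n\epsilon_n^2)\}$. But the same polynomial tails that make the prior-concentration step work prevent a polynomial choice of $R_n$: since $\pi(|\vartheta_i|>R)\asymp (s/R)^{df}$, forcing $\pi(\mathcal{F}_n^c)\leq \exp\{-(C+2)n\epsilon_n^2\}=n^{-(C+2)|G^*|}$ requires $\log R_n\gtrsim |G^*|\log n$, and a net at resolution $\epsilon_n$ on a $k_n\asymp|G^*|$-dimensional box of side $R_n$ then has log-cardinality of order $|G^*|^2\log n\succ n\epsilon_n^2$ whenever $|G^*|\to\infty$, a regime the theorem explicitly allows ($|G^*|\prec n/\log n$). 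This is repairable in two standard ways: by proportional (shell-by-shell) covering, which needs no sup-norm truncation; or, as the paper does, by dispensing with nets entirely---the test $\phi$ is built from the plug-in block means $\widehat\theta_G(y)$ and variances $\widehat\sigma_G^2(y)$ over the $\binom{n-1}{(1+\delta)|G^*|}$ candidate supports $G\supset G^*$, so a single $\chi^2$ statistic (Lemma \ref{chi}) simultaneously tests against every alternative, bounded or not, whose thresholded support lies in $G$, and the only union bound is over supports, giving (\ref{test1}) and (\ref{test2}) directly. You should adopt one of these devices; with that repair the rest of your argument goes through and coincides with the paper's.
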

The proof closely follows the Theorem A.1 in \cite{SongL2017}, and for the sake of readability, the proof is provided in the Appendix.
The first inequality of sufficient condition set (\ref{cond1}) requires that
the prior imposed on $\vartheta_i$'s is highly concentrated around zero, such that it induces sufficient Bayesian shrinkage effect for those $\vartheta_i$'s whose true value is 0. The rest inequalities of (\ref{cond1}) essentially require that
the prior density at true parameter is at least of order $e^{-cn\epsilon_n^2}$ for some $c$, and this helps to prevent over-shrinkage for those $\vartheta_i$'s whose true value is not 0.
Similar conditions, which need the prior to be ``thick" at true parameter values, 
are regularly used in Bayesian literature \citep{Jiang2007,KleijnV2006,GhosalGV2000,GhosalV2007}. Given the concrete forms for prior density $f_\theta$ and $f_\vartheta$, the second and third inequalities of (\ref{cond1}) are equivalent to some upper bound constraints on the magnitude of
$\theta_1^*$ and $\max_i|\vartheta_i^*|$'s (see e.g., Corollary \ref{thmt}). The fourth inequality of (\ref{cond1}) trivially holds for any fixed $a_\sigma$ and $b_\sigma$ if $\sigma^{*2}$ is assumed to be a constant. If the unknown error variance is supposed to be varying with respect to $n$, e.g.,
the studies of Gaussian sequence models \cite{johnstone2010high} commonly assume that $\sigma^{*2}\propto n^{-1}$, then one can choose a fixed $a_\sigma$ and 
$b_\sigma\asymp n^{-\kappa}$ for some $\kappa>0$. Under such a choice, condition (\ref{cond1}) holds as long as $-k\log n\leq \log(\sigma^{*2})<K\log n$ for some positive constant $K$.

The above result states that almost all the posterior mass contract into
a neighborhood of $\theta^*$ with a radius $M\sigma^*\sqrt{n}\epsilon_n$, that 
is, the posterior convergence rate is of an order $\sigma^*\sqrt{|G^*|\log n}$. Note that if
the partition index set $G^*$ were known, the oracle rate of contraction turns out to be $O(\sigma^*\sqrt{|G^*|})$. Hence, the Bayesian shrinkage achieves the ideal risk up to a logarithmic term in $n$. In frequentist literature, Theorem 2.7 of \cite{rinaldo2009properties} showed that the convergence rate of Fused Lasso is no larger than $O(\sigma^*\sqrt{|G^*|\log |G^*|})$. However, this rate is not directly comparable since an additional minimal signal strength
condition, which ensures that $G^*$ can be fully recovered in probability, is imposed in this paper. 

The posterior distribution of $\vartheta$ are always continuous, and doesn't directly provide Bayesian inferences for the block structure, or equivalently, the unknown $G^*$. The following result characterizes the some asymptotic performance of posterior block partition via discretization.
\begin{theorem}[Posterior selection]\label{thm2}
Assume the conditions of Theorem \ref{thm1} hold, denote
$G(\theta,\sigma)=\{i:|\vartheta_i/\sigma|<\epsilon_n/n\}$, then the posterior of $G(\theta,\sigma)$
satisfies
\[
\pi(\{|G(\theta,\sigma)\backslash G^*|>\delta|G^*|  \} |y)\rightarrow 0,
\]
for some fixed constant $\delta$, where the convergence holds in probability and in $L_1$.
\end{theorem}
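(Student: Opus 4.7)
The plan is to supplement the $L_2$ posterior contraction of Theorem \ref{thm1} with an additional sparsity-type control on $\vartheta$, showing that with high posterior probability the number of coordinates $i$ satisfying $|\vartheta_i/\sigma|\ge \epsilon_n/n$ is at most a constant multiple of $|G^*|$. Writing $\widehat G = G(\theta,\sigma)^c = \{i : |\vartheta_i/\sigma|\ge \epsilon_n/n\}$, this estimated jump set then satisfies $|\widehat G\setminus G^*|\le|\widehat G|\le \delta|G^*|$, which in the selection-consistent reading yields the conclusion of the theorem. A crude $L_2$ argument from Theorem \ref{thm1} alone is not sharp enough because it only gives $|\widehat G\setminus G^*|\lesssim n^3$; the proof must exploit the strong prior shrinkage encoded in the first inequality of (\ref{cond1}).

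The prior concentration condition does the heavy lifting. Because $\epsilon_n/n$ eventually exceeds $|G^*|\log n/n^2$, that condition implies, for each coordinate $i$,
\[
\pi\bigl(|\vartheta_i/\sigma|\ge \epsilon_n/n\bigr)\lesssim n^{-(1+u)}.
\]
Since the $\vartheta_i$'s are conditionally independent given $\sigma$, a binomial union bound then gives $\pi(|\widehat G|\ge k)\le \binom{n-1}{k} n^{-k(1+u)}\le n^{-uk}$, so the prior mass on configurations with more than $K|G^*|$ ``large'' components is at most $n^{-uK|G^*|}$.

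Next, I would feed this prior bound into the standard posterior-ratio inequality already employed in the proof of Theorem \ref{thm1}:
\[
\pi(A\mid y)\le \frac{\int_A L(y\mid\theta,\sigma^2)\,d\pi(\theta,\sigma^2)}{\int_{B_n} L(y\mid\theta,\sigma^2)\,d\pi(\theta,\sigma^2)},
\]
taking $A=\{|\widehat G|\ge K|G^*|\}$ and $B_n$ the Kullback-Leibler neighborhood of $(\theta^*,\sigma^{*2})$ used there. By (\ref{cond1}) the denominator is at least $\pi(B_n)\exp(-c n\epsilon_n^2)\ge n^{-c|G^*|}$ with probability tending to $1$. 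For the numerator, Fubini gives $E^*\int_A L(y\mid\theta,\sigma^2)/L(y\mid\theta^*,\sigma^{*2})\,d\pi=\pi(A)\le n^{-uK|G^*|}$, and the sieve-and-test refinement of \cite{SongL2017} upgrades this to the in-probability and $L_1$ statements required. Choosing $K$ so that $uK$ exceeds the constants appearing in the denominator and the test exponents forces $\pi(A\mid y)\to 0$, and setting $\delta=K$ completes the argument.

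The main technical obstacle is controlling the likelihood ratio on $A$: parameters in $A$ are not exactly sparse, so one needs an entropy bound for the parameter space restricted to vectors with at most $K|G^*|$ ``large'' coordinates, together with a test based on the residual $y-X\beta$ under the triangular design $X$ implicit in $\theta=X\beta$ with $\beta=(\theta_1,\vartheta_2,\ldots,\vartheta_n)$. Verifying a restricted-eigenvalue-type property of $X$ on these sparse subsets, and dovetailing the resulting test with a compact truncation of $\sigma$, is the delicate piece; the rest of the argument runs parallel to the proof of Theorem \ref{thm1}.
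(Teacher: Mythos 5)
Your proposal is correct and follows essentially the same route as the paper: the paper's proof of Theorem \ref{thm2} is exactly the prior-mass bound on the event $B_n$ that at least $\delta|G^*|$ of the null differences exceed $\sigma\epsilon_n/n$ (per-coordinate prior probability $n^{-(1+u)}$ from the first inequality of (\ref{cond1}), then a binomial tail bound), combined with the marginal-likelihood lower bound and the Fubini/Markov step packaged in Lemma \ref{lemmab}. The "main technical obstacle" you flag at the end is moot for this theorem: no test, entropy bound, or restricted-eigenvalue property is needed on the set $A$, because the Fubini identity $E^*\int_A L(y\mid\theta,\sigma^2)/f^*(y)\,d\pi=\pi(A)$ you already wrote down controls that set directly, with testing reserved only for the contraction set $C_n$ of Theorem \ref{thm1}.
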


Therefore, the posterior distribution of $G(\theta,\sigma)$, which is induced by the posterior of ($\theta, \sigma^2)$ and mapping $G(\cdot,\cdot)$, can be treated as a discrete posterior distribution for the unknown $G^*$ and used for Bayesian block partition selection. Theorem \ref{thm2} essentially claims that the number false positive selections of via discretization $G(\cdot,\cdot)$ is bounded in posterior probability. Such a result is comparable to the model selection behavior of Bayesian Dirichlet-Laplace
shrinkage \cite{BhattacharyaPPD2015}. It is worth to note that if certain minimal signal strength condition holds as well, i.e., $\min_{\vartheta_i^*>0}|\vartheta_i^*|$ is bounded away from zero, with addition assumptions, one can derive an even stronger posterior selection consistency result that $\pi(G(\theta,\sigma)=G^*   |y)\rightarrow^p 1$, following the proof of Theorem 2.3 in \cite{SongL2017}. Readers of interests can easily derive such posterior selection consistency by themselves.

It is not difficult to verify the imposed conditions for the proposed $t$ fusion shrinkage prior (\ref{tprior}).
\begin{corollary}\label{thmt}
When $f_\theta$ is the cdf of normal distribution $N(0,\lambda_1)$ and $f_\vartheta$ is the cdf of a $t$-distribution with scale parameter $s$, then the first three inequalities of
(\ref{cond1}) hold when
\[
\begin{split}
&\log(\max_i|\vartheta_i^*|/\sigma^*)=O(\log(n)),\\
&\theta_1^{*2}/(\lambda_1\sigma^{*2})+\log(\lambda_1)=O(\log n),
\end{split}
\]
and $s=n^{-c}$ for some sufficiently large $c$.
\end{corollary}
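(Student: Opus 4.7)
The plan is to verify each of the three displayed inequalities of (\ref{cond1}) by plugging in the explicit $t_{df}$ scale-mixture density and the Gaussian density, and in each case reduce the verification to the stated assumptions on $\max_i|\vartheta_i^*|/\sigma^*$, on $\theta_1^*/\sigma^*$, and on the scale $s=n^{-c}$. The fourth inequality of (\ref{cond1}) involves only $\sigma^*$ and the hyperparameters $(a_\sigma,b_\sigma)$, so it is outside the scope of Corollary~\ref{thmt}.

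First I would handle the tail concentration inequality for $f_\vartheta$. Writing $\vartheta_i/s \sim t_{df}$ in the standard parameterization, the probability in question becomes $\mathbb{P}(|t_{df}| > a/s)$ with $a = |G^*|\log(n)/n^2$. Using the standard polynomial tail bound $\mathbb{P}(|t_{df}| > r) \le C_{df}\, r^{-df}$ for $r$ bounded away from zero, one gets that the tail mass is $O\!\bigl((s/a)^{df}\bigr) = O\!\bigl(n^{-(c-2)df}(|G^*|\log n)^{-df}\bigr)$ when $s=n^{-c}$. Choosing $c$ large enough that $(c-2)df > 1+u$ for some $u>0$ therefore produces a bound of order $n^{-(1+u)}$, which is what the first inequality of (\ref{cond1}) demands.

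Next, for the density lower bound on the $\vartheta$-prior, I would use the explicit form $f_\vartheta(x) = C_{df}\, s^{-1}\bigl(1 + x^2/(s^2 df)\bigr)^{-(df+1)/2}$. Taking negative logarithms and using $M/s \to \infty$ with $M := \max_i|\vartheta_i^*/\sigma^*|+1$, one obtains $-\log f_\vartheta(x) \le (df+1)\log M + df\cdot c\log n + O(1)$ uniformly on $|x|\le M$. Under the assumption $\log(\max_i|\vartheta_i^*|/\sigma^*) = O(\log n)$, both terms are $O(\log n)$, giving the required second inequality. For $f_\theta$, direct substitution of the $N(0,\lambda_1)$ density yields $-\log f_\theta(x) = \tfrac{1}{2}\log(2\pi\lambda_1) + x^2/(2\lambda_1)$; at the boundary $|x| = |\theta_1^*/\sigma^*|+1$, the elementary bound $(|a|+1)^2 \le 2a^2+2$ reduces the right-hand side to $\tfrac{1}{2}\log(2\pi\lambda_1) + \theta_1^{*2}/(\sigma^{*2}\lambda_1) + 1/\lambda_1$, which the second displayed assumption of the corollary already forces to be $O(\log n)$, and hence trivially $O(|G^*|\log n)$.

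None of the steps is delicate: the only thing to watch is that $c$ in $s=n^{-c}$ must be chosen large enough to serve two opposing roles simultaneously. Shrinking $s$ sharpens the tail concentration (the tail mass decays like $n^{-(c-2)df}$) but also inflates the density lower bound (the contribution $c\,df\log n$ grows with $c$). Since the density contribution is only linear in $c\log n$ while the rate constraint on $c$ from the tail bound is a one-shot inequality $(c-2)df > 1+u$, both requirements can be met with a single sufficiently large constant $c$, so the corollary follows.
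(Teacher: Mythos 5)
Your verification is correct and is precisely the computation the paper leaves to the reader (the appendix contains no proof of Corollary~\ref{thmt}, which is introduced only with ``it is not difficult to verify''): substitute the scaled $t$ density into the tail condition and the $\vartheta$-density lower bound, the $N(0,\lambda_1)$ density into the $\theta_1$ condition, and observe that the tail requirement $(c-2)df>1+u$ and the density cost $df\cdot c\log n$ are compatible for a single large constant $c$. Two minor remarks: the first inequality of (\ref{cond1}) must of course be read as $1-n^{-(1+u)}$ (an evident typo in the paper, consistent with how it is used in the proof of Theorem~\ref{thm3}), which is how you treat it; and your final bound for $-\log\underline\pi_\theta$ contains the residual term $1/\lambda_1$, which the stated hypothesis $\theta_1^{*2}/(\lambda_1\sigma^{*2})+\log(\lambda_1)=O(\log n)$ does not literally control in the degenerate regime $\theta_1^*=0$, $\lambda_1\to 0$ --- a blemish in the corollary's statement rather than in your argument, and one that vanishes once $\lambda_1$ is bounded away from zero, as it is in every application in the paper.
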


The above results hold not only for the scaled $t$ prior, but also for any other choice of $f_\vartheta$, as long as $f_\vartheta$ has a polynomially decaying tail. Note that the above results can not be generalized to light tailed distributions such as Laplace-fusion prior,
since it will lead to an unrealistic sufficient condition that 
$\max_i|\vartheta_i^*|/\sigma^*=o(1)$. 

\subsection{Bayesian posterior inference}\label{sectoy}
In this section, we will illustrate the posterior behavior of Bayesian $t$-fusion by a 
toy example, and compare it to Bayesian Laplace-fusion. We will also discuss other issues related to hyperparameter choice.

A simulation data was generated with $n=100$ and $\sigma^*=0.5$. The data and underlying true parameter value are plotted in Figure \ref{toy1}.
Three estimation procedures are considered: 1) $L_1$ fusion estimation (\ref{fusest}) where 
the tuning parameter are selected by cross validation; 2) Bayesian Laplace prior (\ref{Lprior}) with $\lambda = \sqrt{2\log (n)} $ and $\lambda_1$=5; 3) Bayesian t prior (\ref{tprior})
with $a_t=2$, $b_t=0.005$, $a_\sigma=0.5$, $b_\sigma=0.5$ and $\lambda_1=5$.
The posterior samples are obtained by Gibbs sampler for 2000 iterations.
The frequentist estimator and boxplots of Bayesian marginal posterior distributions are 
displayed in Figure \ref{toy1}.
Note that for the sake of readability of the figure, we don't draw the outliers in these boxplots.

\begin{figure}[htb]
	\begin{center}
		\includegraphics[width=12cm]{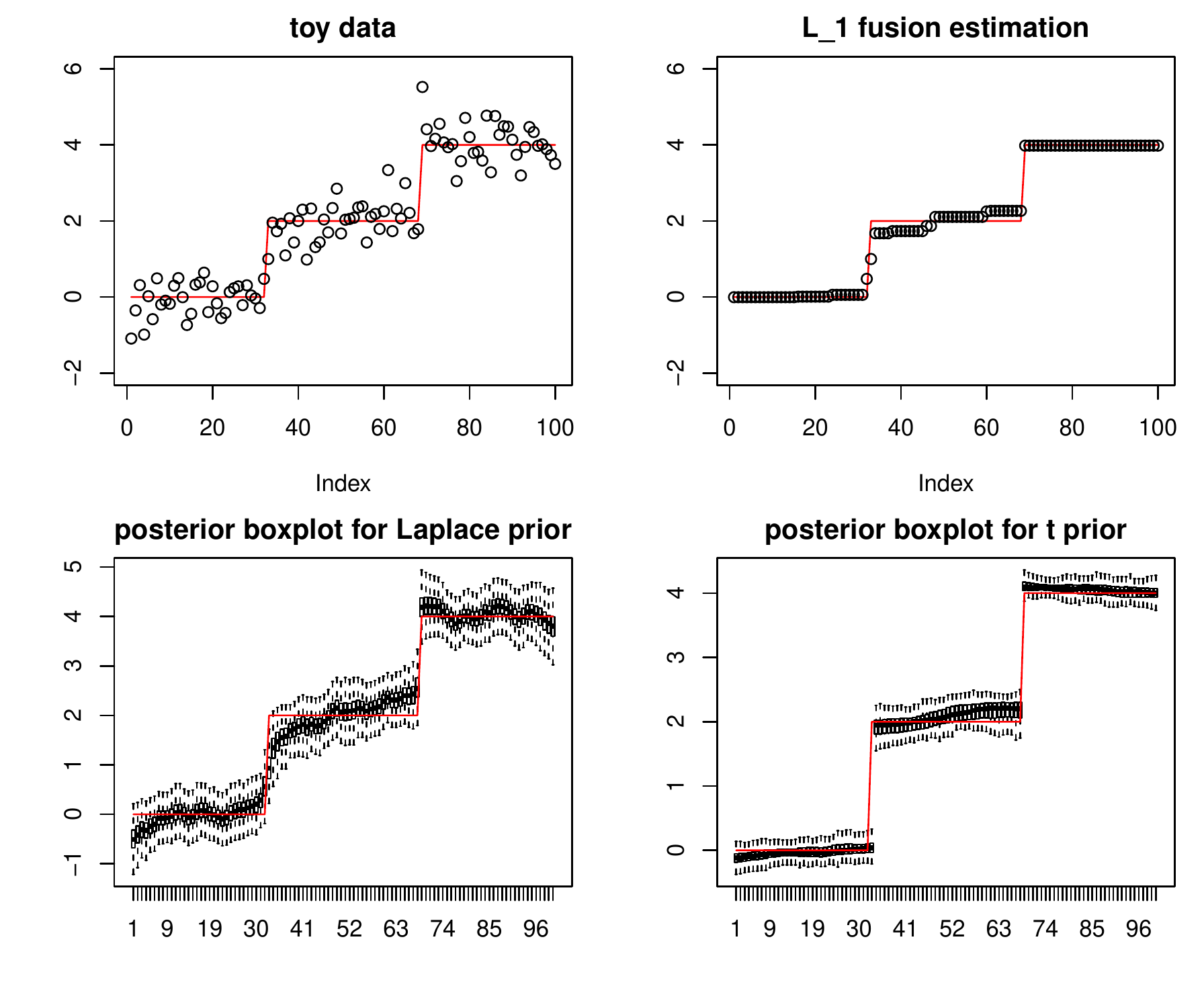}\vskip -0.3in
		\caption{
			Upper Left: Simulated toy data; Upper right: Frequentist fusion estimation 
			(\ref{fusest}); Lower left: Marginal posterior boxplots for each $\theta_i$
			under Laplace fusion prior; Lower right: Marginal posterior boxplots under $t$ fusion prior. The red curve denotes the true $\theta^*_i$'s which contain three blocks.
		}\label{toy1}
	\end{center}
\end{figure}

The comparison shows that the $L_1$ fusion penalty leads to a strictly sparse $\widehat\vartheta$ estimation and the estimated $\widehat\theta$ does have a blocking structure, but there is a mild over-partition issue.
Due to the nature of Bayesian shrinkage prior, the two Bayesian approaches only produce continuous posterior samples of $\vartheta_i$.
Comparing with the Laplace prior, the advantages of $t$ prior are quite 
obvious. Its posterior concentration is better, i.e., shorter boxes and whiskers for the boxplots, and posterior mean is much more close to the true red curve. Although its posterior estimation is not exactly blocky, one can
visually identify the three blocks. 
In contrast, the Laplace prior generates larger posterior variance, and its posterior center smoothly fluctuates around the true step function.
As discussed at the end of Section \ref{secmodel}, Laplace fusion prior itself merely has shrinkage effect for the successive differences, and hence the data fluctuation carries over to the posterior distribution of $\theta$. In this case, it is not clear to recover the underlying block structure.

Note that for both (\ref{Lprior}) and (\ref{tprior}) prior specifications, their posterior distributions highly depend on the choice of hyperparameter, especially the scale parameters of the Laplace (parameter $1/\lambda$) and $t$ prior (parameter $s$) distribution. To understand the influence of 
the scale parameter, we increase and decrease the scale parameter by a factor of $\sqrt{10}$, for both Laplace and $t$ priors. The corresponding posteriors are displayed
in Figure \ref{toy2}.

\begin{figure}[htb]
	\begin{center}
		\includegraphics[width=12cm]{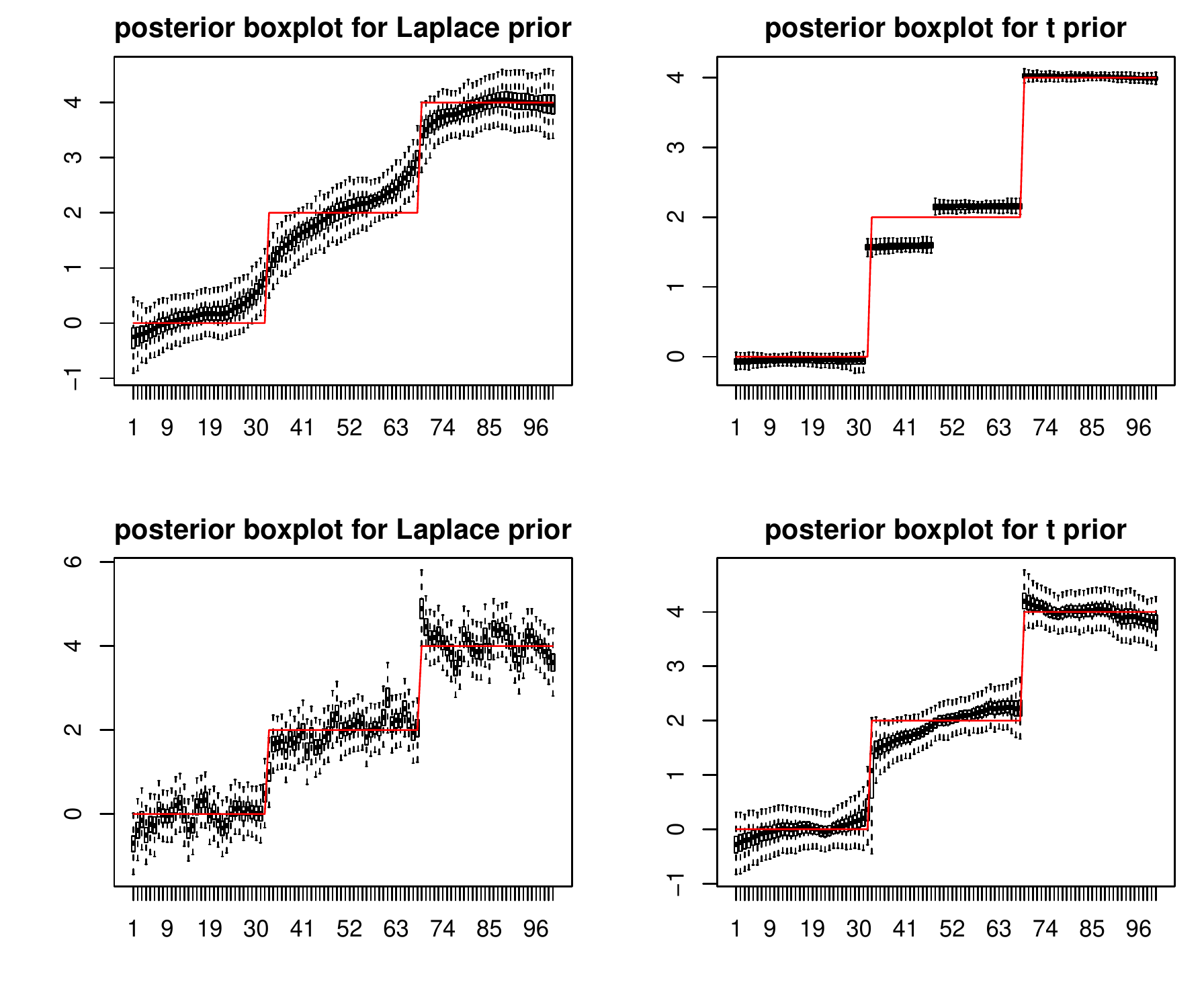}\vskip -0.3in
		\caption{
			Bayesian posterior with different choice of scale parameter.
			Upper Two: The scale parameter decreases by a factor of $\sqrt{10}$.
		    Lower Two: The scale parameter increases by a factor of $\sqrt{10}$.
		}\label{toy2}
	\end{center}
\end{figure}

From Figure \ref{toy2} and its comparison with Figure \ref{toy1}, we see that, for $t$ prior specification, smaller scale hyperparameter leads to smaller posterior variation, and vise versa. 
Choosing an overly large scale parameter weakens the shrinkage effect, thus it fails to shrink the $\theta_i$'s that belong to the same block towards same value, and the corresponding posterior estimation is not flat within blocks. On the other hand, choosing an overly small scale hyperparameter, although yields very strong shrinking and grouping effect, may
potentially over-partition the data.
In conclusion, the scale parameter of $t$ fusion prior controls the aggressiveness of posterior
block partition.
For Laplace prior, both increasing or decreasing scale parameter can not remedy its poor
behavior demonstrated in Figure \ref{toy1}. The choice of scale parameter only affects the smoothness of posterior of $\theta$. Smaller scale parameter leads to smoother non-parametric estimation which is similar to smooth spline regression; larger scale parameter leads to a rather rugged estimation.
This phenomenon can be partially explained by Figure \ref{priorshape}.
As the scale of the Laplace prior decreases (i.e., $\lambda$ increases), the 
conditional prior of $\theta_i$ given $(\theta_{i-1},\theta_{i+1})$ acts more and more like a uniform distribution over $[\theta_{i-1},\theta_{i+1}]$.
Hence, conditional on $(\theta_{i-1},\theta_{i+1})$, the posterior of ``local" total variation, $|\theta_{i+1}-\theta_i|+|\theta_{i}-\theta_{i-1}|$, always
decreases to the minimum value $|\theta_{i+1}-\theta_{i-1}|$ when the scale parameter decreases to zero, regardless of the observation value $y_i$.

As showed in the previous toy example, the hyperparameter value plays an important role
for the performance of Bayesian posterior inferences. 
The theoretical suggestion, i.e., the first inequality of (\ref{cond1}), gives a very small scale parameter, which in practice leads to severe over-partition issue under moderate $n$.
Unlike frequentist high dimensional penalization estimation whose tuning parameter is usually determined by cross validation or selection criterion such as EBIC \cite{ChenC2008, ChenC2012}, 
choosing a proper prior hyperparameter posts many difficulty for Bayesian statisticians.
Conventional choices include imposing a hyper-prior on the hyperparameter 
\cite{VanSV2017,Castillov2012} or empirical Bayes \cite{robbins1985empirical}.
For high dimensional sparse GLMs, \cite{LiangSY2013} suggested choosing a hyperparameter such that the posterior mean and posterior mode are close. Several Bayesian works \cite{shimamura2018bayesian,betancourt2017bayesian}
considered only using the sparse MAP as the Bayesian estimator, and thus abandon the whole posterior distributional information. 
This strategy somehow reduces the Bayesian computation to a 
frequentist optimization problem, and then the hyperparameter can be determined 
by EBIC criterion. It is beyond the scope of this work to theoretically study how to select an appropriate hypereparameter, i.e., the scale parameter of the $t$ prior. An empirical suggestion based on authors' experience is to choose the scale parameter of (\ref{tprior}) such that 
\begin{equation}\label{tune}
P(|t_{df}(s)|\geq \sqrt{\log (n)/n}) \approx 1/n.
\end{equation}
Although it doesn't quite satisfy the conditions in Corollary \ref{thmt}
 which suggests $s=n^{-c}$ for some large $c$, but in practice, it indeed yields 
a reasonable and stable Bayesian performance. Note that our prior choice $a_t=2$, $b_t=0.005$ approximately satisfies (\ref{tune}).

\section{Bayesian $t$-shrinkage for Bayesian clustering}\label{secada}
In this section, we would like to extend the applications of Bayesian $t$ fusion shrinkage 
to Bayesian clustering problem, where the parameter $\theta$ in model
(\ref{model}) is assumed to follow an unknown clustering structure. 
In other words, the observations $y_i$'s are not organized in a sensible order, and
we no longer assume that the true cluster only consists of consecutive indexes.


A full Bayesian clustering analysis, or subgroup identification, usually imposes a prior on the clustering structure, including the number of clusters and how to partition observations into these clusters. For example, one could consider that the cluster structure arises from a tree splitting process, and impose certain prior on the tree nodes
\cite{berger2014bayesian,heller2005bayesian}. The resultant posterior sampling hence usually requires an inefficiently reversible-jump Metropolis-Hastings move to travel across different clustering models
in the sampling space.
Another common approach is to model a mixture distribution via nonparametric priors such as Dirichlet process or Chinese restaurant process \cite{neal2000markov}.
These mentioned approaches enforce explicit {\it clustery} posterior samples
by directly applying discrete prior on subgroup structure. In this section, we will show how to use Bayesian $t$-shrinkage to induce implicit posterior clustering structure.


To implement shrinkage prior, we need to formulate the problem in a proper statistical modeling such that its parameter possesses certain sparsity feature. Conceptually, imposing shrinkage priors on $\theta_i-\theta_{i-1}$ shall not work for this clustering problem since $\{\theta_i-\theta_{i-1}\}_{i=2}^n$ is no longer a sparse vector. However, if some meaningful permutation $r$ of $\theta_i$ is known
such that $\{\theta_{r(i)}-\theta_{r(i-1)}\}_{i=2}^n$ is indeed a sparse vector, for example, $\theta_{r(i)}$ is ascending or descending, then the problem will reduce to Bayesian
shrinkage fusion studied in the previous section.

However, in practice, such $r(\cdot)$ is generally not available. Thus one simple solution would be to substitute $r(\cdot)$ with some estimator $\widehat r()$. 
For model (\ref{model}), a trivial estimator is the rank statistic of the responses $y_i$, i.e., $y_{\widehat r(i)}=y_{(i)}$ where 
$y_{(i)}$ denotes the order statistics of $y_i$'s.
Thus this leads to hybrid Bayesian modeling, depending on a
frequentist pilot estimator $\widehat r$:
\begin{equation}\label{tprior2}
\begin{split}
\sigma^2&\sim \mbox{Inverse-Gamma}(a_\sigma,b_\sigma),\quad 
\theta_{\widehat r(1)}|\sigma^2\sim N(0, \sigma^2\lambda_1),\\
(\theta_{\widehat r(i)}-\theta_{\widehat r(i-1)})|\sigma^2,\lambda_i &\sim N(0, \sigma^2
\lambda_i), \quad \lambda_i\sim\mbox{Inverse-Gamma}(a_t,b_t)\quad\mbox{for all }i=2,\dots, n.
\end{split}
\end{equation}
The corresponding posterior Gibbs sampler follows:
\begin{equation}\label{gibbs2}
\begin{split}
\lambda_i\sim  &\mbox{Inverse-Gamma}\left(a_t+1/2,b_t+\frac{(\theta_{\widehat r(i)}-\theta_{\widehat r(i-1)})^2}{2\sigma^2}\right)
\mbox{ for all }i=2,\dots, n.\\
\sigma^2\sim&\mbox{Inverse-Gamma}
\left(a_\sigma+n,b_\sigma+\frac{\|y-\theta\|^2}{2}+\frac{\theta_{\widehat r(1)}^2}{2\lambda_1}
+\sum_{i=2}^{n}\frac{(\theta_{\widehat r(i)}-\theta_{\widehat r(i-1)})^2}{2\lambda_i}\right),\\
\theta_{\widehat r(i)}\sim&N( \mu_i,\nu_i),
\end{split}
\end{equation}
where $\nu_i^{-1}=1/\sigma^2+1/\lambda_{i+1}\sigma^2+1/\lambda_i\sigma^2$ and 
$\mu_i = \nu_i(y_{\widehat r(i)}/\sigma^2+\theta_{\widehat r(i+1)}/\lambda_{i+1}\sigma^2+\theta_{\widehat r(i-1)}/\lambda_i\sigma^2)$.
The idea of taking advantage of a pilot estimator $\widehat\theta$ and its rank statistic ${\widehat r(i)}$ has also been implemented in the two-stage frequentist approach \cite{ke2015homogeneity}.

It is worth to mention that posterior consistency Theorems \ref{thm1} and \ref{thm2} for Bayesian shrinkage fusion don't apply to the above Bayesian modeling (\ref{tprior2})
even when $\theta_{\widehat r(i)}$ is in a ascending order, because $\varepsilon_{\widehat r(i)}$ are no longer iid error observations. 
Such a data dependent prior (\ref{tprior2}) can also be interpreted 
as a misspecified Bayesian modeling \cite{kleijn2006misspecification}, where the data is the order statistics $y_{(i)}$ from a mixture distribution, and we model $y_{(i)}$'s as independent Gaussian variables whose mean function $E(y_{(i)})$ is a step function.

Although the prior modeling (\ref{tprior2}) is quite natural and intuitive, there are several issues. First, comparing with the iid observations, it is more difficult to to identify the underlying clustering structure of sorted observations. For example, it is much more difficult to visually identify the 3-cluster structure in the toy data showed in Figure \ref{toy3} than the toy data showed in Figure \ref{toy1}. 
Secondly, (\ref{tprior2}) fails to take account of uncertainty of the estimator $\widehat{r}$, and the strict monotonicity of $y_{\widehat r(i)}$ will always
carry over to the posterior of $\theta_{\widehat r(i)}$.
Thirdly, over-clustering will occur. To understand this, let us consider
that all data $y_i$'s have the same mean, i.e.,
there is only one cluster. But the mean function of the sorted data, $E(y_{(i)})$, is actually a strictly increasing function. 
From the perspective of Bayesian misspecification, our posterior of $\theta_{\widehat r(i)}$'s should contract towards the best step function, in the sense of minimizing the Kullback–Leibler divergence \cite{kleijn2006misspecification},
rather than towards the constant function $E(y_i)$.
This causes the over-clustering, and furthermore, the posterior 
 consistency of $t$ shrinkage prior established in Corollary
 \ref{thmt} doesn't hold anymore.
\begin{theorem}\label{pos1}
Assume that all $y_i\sim N(0,1)$ with known variance $\sigma^*=1$. If the prior (\ref{tprior2}) is used (except that there is no necessity to impose a prior on $\sigma^2$) and the scale parameter satisfies $-\log(b_t)\asymp\log n$, then in high probability
\[
\pi(\|\theta\|\leq \sqrt{M\log n}|y)<1/2,
\]
i.e., the rate-$\sqrt{\log n}$ posterior consistency fails.
\end{theorem}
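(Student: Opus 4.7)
The driving intuition is that, when $y_1,\dots,y_n$ are iid $N(0,1)$, the sorted values $y_{(i)}$ have expectations $\theta^\dagger_i:=\mathbb{E}[y_{(i)}]$ forming a nontrivial step-like function with $\|\theta^\dagger\|^2=n-\sum_i\mathrm{Var}(y_{(i)})=n(1-o(1))$, since $\sum_i\mathrm{Var}(y_{(i)})=o(n)$ is standard for Gaussian order statistics. Thus under the misspecified model (\ref{tprior2}) the Kullback--Leibler-optimal parameter is $\theta^\dagger$, with $\|\theta^\dagger\|\asymp\sqrt n$ far outside $B:=\{\theta:\|\theta\|\le\sqrt{M\log n}\}$. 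My plan is to prove $\pi(B\mid y)<1/2$ by an evidence-ratio calculation comparing the posterior mass on $B$ against the mass on a carefully chosen neighbourhood $A$ of $y':=(y_{(1)},\dots,y_{(n)})$; I avoid appealing to a misspecified-posterior contraction theorem, which is awkward here because the $y_{(i)}$'s are not independent.

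All bounds will be made on a high-probability event $\mathcal{E}$ where $\|y\|^2\in(n/2,2n)$, $\max_i|y_{(i)}|=O(\sqrt{\log n})$, and $\sum_i(y_{(i)}-y_{(i-1)})^2=O(\log n)$; these follow from $\chi^2$-concentration and standard extreme-value/spacing estimates. For the numerator: since the $t$-density is maximised at zero, $\sup_{\theta\in B}\pi(\theta)\le\pi(0)=f_\vartheta(0)^{n-1}f_\theta(0)$, while $\int_B L(\theta\mid y)\,d\theta=\mathbb{P}(N(y',I_n)\in B)\le e^{-n/2+o(n)}$ by a tail bound for non-central $\chi^2$ with noncentrality $\|y'\|^2\approx n$. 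Writing $b_t=n^{-c}$ so that $f_\vartheta(0)\propto b_t^{-1/2}=n^{c/2}$, and using $\mathrm{vol}(B)\asymp(2\pi eM\log n/n)^{n/2}$, this gives $\log\int_B\pi L\le(cn/2)\log n-(n/2)\log n+(n/2)\log\log n-n/2+O(n)$.

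For the denominator take $A$ to be the $L_\infty$-ball of radius $\epsilon=\sqrt{b_t}/K$ centred at $y'$, with $K$ a large constant. On $A$, $\|y'-\theta\|^2\le n\epsilon^2=n^{1-c}/K^2=o(n)$, so $L(\theta)\ge(2\pi)^{-n/2}e^{-o(n)}$; and for $d_i^y:=y_{(i)}-y_{(i-1)}$ the inequality $(d_i^y+\xi)^2\le 2(d_i^y)^2+2\xi^2$ valid for $|\xi|\le 2\epsilon$ yields $\log f_\vartheta(d_i^y+\xi)\ge\log f_\vartheta(d_i^y)-O(1/K^2)$ per coordinate, so $\inf_A\pi\ge\pi(y')\,e^{-O(n/K^2)}$. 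Combined with $\log\mathrm{vol}(A)=n\log(2\epsilon)=-(cn/2)\log n+O(n)$, the $(cn/2)\log n$ contributions coming from $(n-1)\log f_\vartheta(0)$ and from the volume cancel, and the lower bound $\log\int_A\pi L\ge O(n)$ follows.

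The principal technical obstacle is the estimate $\sum_i\log\{1+(d_i^y)^2/(2b_t)\}=o(n)$, needed so that $\log\pi(y')=(cn/2)\log n+o(n)$. I would prove this by splitting the spacings: for small ones with $(d_i^y)^2\le b_t$ the bound $\log(1+x)\le x$ gives a total contribution of at most $\sum_i(d_i^y)^2/(2b_t)$, controlled via $\sum_i(d_i^y)^2\le(\max_id_i^y)(y_{(n)}-y_{(1)})=O(\log n)$ on $\mathcal{E}$; and for the few large spacings, each contributes only $O(\log n)$. Putting all the pieces together, $\log(\int_B\pi L/\int_A\pi L)\le -(n/2)\log n+(n/2)\log\log n+O(n)\to-\infty$, so $\pi(B\mid y)\to 0$ on $\mathcal{E}$, which is strictly stronger than $\pi(B\mid y)<1/2$. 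The hardest part is managing the interplay of the three $(cn/2)\log n$-scale quantities (the $t$-normalising constant, the $L_\infty$-ball volume, and $\log\pi(y')$) uniformly across the allowed range of $c$.
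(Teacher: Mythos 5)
Your intuition — that the posterior mass migrates toward the (nonconstant) mean function of the order statistics — is the same intuition the paper uses, but your execution has two concrete problems that break the argument precisely in the regime the theorem is meant to cover. First, the pivotal estimate $\sum_i\log\{1+(d_i^y)^2/(2b_t)\}=o(n)$ is false when $b_t=n^{-c}$ with $c\geq 2$. The bulk spacings of $n$ standard Gaussians are of order $1/n$, so $(d_i^y)^2/b_t\approx n^{c-2}$ for roughly all $n$ indices, and the sum is of order $(c-2)\,n\log n$. Your split into small and large spacings does not rescue this: for $c\geq 2$ essentially every spacing is ``large'' relative to $\sqrt{b_t}$, and your count of large spacings via $\sum_i d_i^y/\sqrt{b_t}=O(n^{c/2}\sqrt{\log n})$ is only $o(n)$ for $c<2$. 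Since the hypothesis $-\log b_t\asymp\log n$ permits any fixed $c>0$ — and the regime of interest, per Corollary \ref{thmt}, is $c$ large — this is a genuine gap, not a boundary case. Conceptually, for large $c$ the prior is so concentrated that the posterior does \emph{not} sit near $y'$ at all (the whole point of the method is that it collapses onto step functions with few jumps), so an $L_\infty$-neighbourhood of $y'$ is the wrong comparison set. Second, even where the estimate holds, your final bookkeeping drops a term: the numerator bound retains $(cn/2)\log n$ from $\sup_B\pi\le f_\vartheta(0)^{n-1}f_\theta(0)$, and this term cancels against nothing (the cancellation you invoke is internal to the denominator, between $\log\pi(y')$ and $\log\mathrm{vol}(A)$). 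Restoring it gives $\log(\int_B\pi L/\int_A\pi L)\le \tfrac{c-1}{2}n\log n+\tfrac{n}{2}\log\log n+O(n)$, which diverges to $+\infty$ (i.e., is vacuous) for every $c\ge 1$.

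The paper's proof sidesteps all of this with a translation argument that is insensitive to $c$. It picks $\mu$ with $\mu_i=0$ for $i\le 3n/4$ and $\mu_i=Z_{0.25}/2$ otherwise, and compares the posterior density pointwise at $\theta^*+\Delta\theta$ versus $\mu+\Delta\theta$ for every $\|\Delta\theta\|^2\le M\log n$. Because $\mu$ has a single jump in the sorted ordering, the log-prior cost of the shift is $O(-\log s)=O(\log n)$ regardless of how small the scale is, while the log-likelihood gain is $\Omega(n)$ since $y_{(i)}\gtrsim Z_{0.25}$ on the shifted block. Since Lebesgue measure is translation-invariant and the two balls are disjoint ($\|\mu\|\asymp\sqrt n$), the ball around $0$ carries less than half the mass. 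If you want to salvage your evidence-ratio framework, you would need to replace $A$ by a neighbourhood of such a step function (so that only $O(1)$ successive differences pay the heavy-tail penalty), at which point you are essentially reproducing the paper's comparison; the pointwise shift argument is cleaner because all volume factors cancel identically.
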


The negative result in Theorem \ref{pos1} motivates us to propose an adaptive pseudo Bayesian shrinkage approach. Instead of using a fixed estimator $\widehat{r}$, we update the ``working" order $r$ of $\theta$ over iterations. To be specific, we modify the above Gibbs sampling iteration as:
\begin{equation}\label{gibbs3}
\begin{split}
&\mbox{update $\theta_{ r(i)}$'s, $\lambda_i$'s and $\sigma^2$ according to (\ref{gibbs2})},\\
& \mbox{update the $r(\cdot)$ to be the rank statistic of the current sample $\theta$}.
\end{split}
\end{equation} 
In other words, we update $ r(\cdot)$ to the rank statistic of the newly obtained $\theta$. 

There are a couple of rationales behind the algorithm (\ref{gibbs3}).
First, the update of $r$ potentially allows algorithm (\ref{gibbs3}) to incorporate the uncertainty of rank statistic into the posterior sampling. Heuristically, if the current $r(\cdot)$ is close to the true ranking, (\ref{gibbs3}) will shuffle the ordering of $\theta$ within each cluster, 
hence $y_{r(i)}$ acts like independent samples yielded by random permutation within cluster, rather than order statistics.
Secondly,  the sampling algorithm (\ref{gibbs3}) can be somehow connected with a full Bayesian modeling:
\begin{equation}\label{tprior3}
\begin{split}
r&\sim \mbox{Uniform distribution over all possible permutations},\\
\sigma^2&\sim \mbox{Inverse-Gamma}(a_\sigma,b_\sigma),\quad 
\theta_{ r(1)}|\sigma^2\sim N(0, \sigma^2\lambda_1),\\
(\theta_{ r(i)}-\theta_{ r(i-1)})|\sigma^2,\lambda_i &\sim N(0, \sigma^2
\lambda_i), \quad \lambda_i\sim\mbox{Inverse-Gamma}(a_t,b_t)\quad\mbox{for all }i=2,\dots, n.
\end{split}
\end{equation}
Under (\ref{tprior3}), the conditional posterior of $r(\cdot)$ follows 
$\pi(r|\theta,\lambda_i's, y)\propto \exp\{-\sum_{i=2}^{n}(\theta_{ r(i)}-\theta_{ r(i-1)})^2/\lambda_i^2\sigma^2\}$, from which sampling is expensive. However, when most of $\lambda_i^2$s are tiny, the distribution of $\pi(r|\theta,\lambda_i's, y)$ will be highly concentrated around its MAP
which is approximately the rank statistic of $\theta$. Therefore, updating $ r$ to the rank statistic of current sample $\theta$, as in (\ref{gibbs3}), can be viewed as 
a convenient sampling of $\pi(r|\theta,\lambda_i's, y)$ under the
uniform hyperprior of $r$. And such a hyperprior does benefit the posterior asymptotic performance, at least it remedies the posterior inconsistency described in Theorem \ref{pos1} when there is only one underlying cluster.

\begin{theorem}\label{thm3}
Assume that all $y_i\sim N(\theta_0,\sigma^2)$, i.e. $\theta^*$ is a vector
of $\theta_0$'s. If prior (\ref{tprior3}) is used with $b_t = n^{-c}$ for some sufficiently large $c$, and
$\theta_0^{2}/(\lambda_1\sigma^{*2})+\log(\lambda_1)=O(\log n)$
then there exist constant $M$ and $\epsilon_n\asymp \sqrt{\log n/n}$, such that
\[
\pi(\|\theta-\theta^*\|\geq M\sigma^*\sqrt{n}\epsilon_n|y)\rightarrow 0,
\]
where the convergence holds in probability and $L_1$.
\end{theorem}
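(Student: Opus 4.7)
The plan is to follow the proof framework of Theorem~\ref{thm1}, adapted to the mixture prior~(\ref{tprior3}). The one observation driving the entire argument is a permutation-invariance reduction: because the truth $\theta^*=\theta_0\mathbf{1}$ is a constant vector, every event relevant to Ghosal--Ghosh--van der Vaart posterior contraction (the target neighborhood of $\theta^*$ and a suitable sieve) is invariant under relabeling of the coordinates of $\theta$. Writing $\pi(\theta\mid\sigma)=(1/n!)\sum_r\pi_r(\theta\mid\sigma)$, where $\pi_r$ is the fixed-permutation prior (\ref{tprior}) applied to $(\theta_{r(1)},\dots,\theta_{r(n)})$, the change of variable $(\theta\circ r)_i=\theta_{r(i)}$ (which has unit Jacobian) shows that $\theta\circ r$ under $\pi_r$ has the same law as $\theta$ under $\pi_e$. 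Consequently, for every permutation-invariant event $A$, $\pi_r(A)=\pi_e(A)$, and therefore $\pi(A\mid\sigma)=\pi_e(A\mid\sigma)$; the mixture over $r$ collapses to the fixed identity-permutation prior whenever only symmetric events are probed.

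For prior thickness, the reduction gives
\begin{equation*}
\pi\bigl(\|\theta-\theta^*\|\le M\sigma^*\sqrt{\log n}\,\big|\,\sigma\bigr)=\pi_e\bigl(\|\theta-\theta^*\|\le M\sigma^*\sqrt{\log n}\,\big|\,\sigma\bigr).
\end{equation*}
Under $\pi_e$, $\theta_1\sim N(0,\lambda_1\sigma^2)$ and $\vartheta_i=\theta_i-\theta_{i-1}$ for $i\ge 2$ are iid $t_{df}(s\sigma)$ with $s^2=b_t/a_t\asymp n^{-c}$. Restricting to $\{|\theta_1-\theta_0|\le\eta,\ |\vartheta_i|\le\eta\text{ for all }i\ge 2\}$ with $\eta\asymp\sqrt{\log n}\,n^{-3/2}$ forces $\|\theta-\theta^*\|\le M\sigma^*\sqrt{\log n}$ via the cumulative-sum representation $\theta_j=\theta_1+\sum_{i=2}^j\vartheta_i$. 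The hypothesis $\theta_0^{2}/(\lambda_1\sigma^{*2})+\log(\lambda_1)=O(\log n)$ gives a Gaussian factor of order $n^{-C}$, while the polynomial tail of $t_{df}$ combined with $c$ large yields $P(|\vartheta_i|\le\eta)\ge 1-n^{-K}$ for any prescribed $K$, so the product over $i\ge 2$ is $\Theta(1)$. The resulting bound $\ge\exp(-C'n\epsilon_n^2)$ is exactly the Kullback--Leibler prior thickness the framework demands.

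For the sieve and testing I would mimic Theorem~\ref{thm1}. Take $\Theta_n=\{\theta:\min_r|S_r(\theta)|\le K\}\cap\{a_1\le\sigma^2\le a_2\}$ with $S_r(\theta)=\{i\ge 2:|\theta_{r(i)}-\theta_{r(i-1)}|>\tau_n\}$ for constants $K$ and $\tau_n\asymp\sqrt{\log n}/n$. This set is permutation-invariant, so $\pi(\Theta_n^c)=\pi_e(\Theta_n^c)$, which is a binomial tail probability in which each trial has success probability $O(n^{-K'})$ by the $t$-tail estimate, bounded by $\exp(-Cn\epsilon_n^2)$ for any fixed $C$. Points in $\Theta_n$ lie within $\sqrt{n}\tau_n\sim\sqrt{\log n}$ of the finite-dimensional family of step functions determined by the location and sizes of the $K$ large differences, whose metric entropy at scale $\sigma^*\sqrt{\log n}$ is $O(\log n)=n\epsilon_n^2$, and standard Gaussian likelihood-ratio tests against the truth attain Type~I and Type~II errors of order $\exp(-c'M^2\log n)$, which dominate the entropy once $M$ is large.

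The main obstacle is precisely the sieve-and-testing step, because Theorem~\ref{thm1} is stated for a prior with conditionally independent $\vartheta_i$'s while the mixture over $r$ destroys that structure. The permutation-invariance reduction is the universal remedy: the sieve above is permutation-invariant by construction (the defining condition uses $\min_r$), so bounding its mass reverts to a single-permutation product-prior computation. Assembling prior thickness, sieve control, and tests through the Ghosal--Ghosh--van der Vaart machinery, exactly as in the appendix proof of Theorem~\ref{thm1}, then produces $\pi(\|\theta-\theta^*\|\ge M\sigma^*\sqrt{n}\epsilon_n\mid y)\to 0$ in probability and in $L_1$.
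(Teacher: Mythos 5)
Your approach is essentially the paper's: the appendix proof also reduces everything to a fixed-ordering computation by exploiting the constancy of $\theta^*$ --- the marginal-likelihood bound (\ref{md}) is obtained by writing the prior mass of the KL neighborhood as $\sum_r \pi(\cdot\mid r)\pi(r)$ and noting each conditional term admits the same lower bound, and the prior mass of the bad set $B_n$ is bounded per ranking $r$ by the same binomial-tail argument you give; the rest is the Theorem \ref{thm1} machinery (Lemma \ref{lemmab} with block-mean tests rather than an entropy count, but that is a cosmetic difference). Your permutation-invariance lemma is a cleaner packaging of the same idea.

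One concrete slip in your sieve step, though: you claim points of $\Theta_n$ lie within $\sqrt{n}\tau_n$ of the step-function family, but with at most $K$ jumps exceeding $\tau_n$ in some ordering, a coordinate can drift from its block anchor by as much as $n\tau_n$ through the cumulative sum, so the $L_2$ distance to the piecewise-constant approximant is only bounded by $n^{3/2}\tau_n$. With your choice $\tau_n\asymp\sqrt{\log n}/n$ this is $\sqrt{n\log n}\gg\sqrt{\log n}$, so the tests built around the finite-dimensional family would not cover the sieve as written. You already handled this correctly in the prior-thickness step (where you took $\eta\asymp\sqrt{\log n}\,n^{-3/2}$); the fix is simply to use the same scale for $\tau_n$ (the paper uses $\sigma\epsilon_n/n^2$ for exactly this reason), which costs nothing in the binomial tail bound since the $t$ scale is $n^{-c}$ with $c$ arbitrarily large. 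You should also carry the unknown $\sigma^2$ through the sieve and tests as in the proof of Theorem \ref{thm1}, but that is routine.
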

Further investigations will be pursued to assess the posterior contraction induced by (\ref{tprior3})
when the true parameter $\theta^*$ has multiple-cluster structure.
The minimal cluster difference $\min_{\{(i,j):\theta_i\neq\theta_j\}}|\theta_i-\theta_j|$
shall play an crucial role for the Bayesian asymptotics. 
If the minimal cluster difference is bounded by constant, then even
the best Bayes classifier makes as many as $O(n)$ misclassifications, 
which leads to $\sqrt n$-rate $L_2$ error. And we conjecture that when
the minimal cluster difference is large than $\sqrt{M\log n}$ for some constant $M$, the posterior distribution induced by (\ref{tprior3}) leads to satisfactory Bayesian consistency.

It is worth to mention that the $r$-update step in algorithm (\ref{gibbs3})
may critically change the stochastic stability of the algorithm and that the ergodicity of Markov chains generated by (\ref{gibbs3}) is still unclear.
Therefore, in our application we use it with caution. In all our simulations and toy examples, we initialize with $r=\widehat r$, and the $r$-update step is only implemented every other 20 iterations after certain burn-in period.

In frequentist literature, another popular approach to cluster $\theta_i$'s is 
to impose a pairwise difference penalty as $\sum_{i\neq j}p(|\theta_i-\theta_j|)$ for some penalty $p(\cdot)$\cite{ma2017concave,ShenH2012,KeFW2015}.
Although it is tempting to develop a Bayesian counterpart, i.e., using a prior of form
$\pi(\theta) \propto \prod_{i\neq j}\pi(\theta_i-\theta_j)$ with sparsity induced $\pi$, there are several problems.
First, the pairwise differences $\{\theta_i-\theta_j\}_{i\neq j}$
must satisfy triangle inequality, hence the prior specification $\pi(\theta) \propto \prod_{i\neq j}\pi(\theta_i-\theta_j)$
is actually not an independent prior over all pairwise difference $\{\theta_i-\theta_j\}_{i\neq j}$, and it will be difficult to characterize the effect of such dependency in the prior distribution.
Second, even under clustering structure, $\{\theta_i-\theta_j\}_{i\neq j}$ are actually not sparse. For example, if there are two balanced groups among $\theta_i$'s, then half of the pairwise differences are nonzero. Imposing a sparse prior on a non-sparse system
will lead to severe overshrinkage problem, and our simulation shows that such prior shrinks all $\theta_i$ towards $\bar y$. 
Third, such pairwise difference prior will also heavily increase the computational burden of posterior sampling.

To illustrate and compare the performance of (\ref{gibbs2}) and (\ref{gibbs3}), a simple toy example is conducted. 
We simulate a data set with $n=100$ data points that belong to 3 underlying subgroups. 
For both (\ref{gibbs2}) and (\ref{gibbs3}), we choose $a_t=2$, $b_t=0.005$, $a_\sigma=b_\sigma=0.5$ and $\lambda_1=5$.
In addition, we compare them with
frequentist $L_1$ fusion
\begin{equation}\label{fl1}
\widehat\theta = \arg\min\|y-\theta\|^2/2+\lambda\sum_{i=2}^n|\theta_{\widehat r(i)}-\theta_{\widehat r(i-1)}|,
\end{equation}
and Bayesian Dirichlet process modeling
\begin{equation}\label{dprior}
\begin{split}
&y_i|\theta_i,\sigma^2\sim N(\theta_i, \sigma^2),\quad \theta_i\sim G,\\
&G\sim \mbox{Dirichlet-Process}(N(0, \lambda), a)\mbox{ and }\sigma^2\sim \mbox{Inverse-Gamma}(a_\sigma,b_\sigma)
\end{split}
\end{equation}
with $\lambda=5$,  $a=0.1$ and $a_\sigma=b_\sigma=0.5$.
The posterior sampling algorithms of Dirichlet process modeling are discussed in
\cite{neal2000markov}.
The frequentist estimator and posterior boxplots of $\theta_i$s under different priors are plotted in Figure \ref{toy3}. 
This figure clearly shows that frequentist $L_1$ fusion estimator fails to yield clustering structure for monotone sorted $y_{\widehat r(i)}$'s, especially for the middle portion of the data, one always have $\widehat\theta_{\widehat r(i)}=y_{(i)}$.
Dirichlet process induces a quite reasonable posterior clustering, however its posterior concentration is not good, in the sense that the posterior variance is quite large.
In the opposite, the $t$ modeling (\ref{tprior2}) has a strong prior concentration, which contributes to a better posterior variation, and
 encourages posterior clustering. 
However, consistent to our previous arguments, a severe over-clustering
occurs, and the posterior identifies more than 6 clusters.
At last, the posterior obtained by algorithm (\ref{gibbs3}) combines both advantages of DP prior and (\ref{tprior2}). Comparing with DP prior, it
has much smaller posterior variance; comparing with prior (\ref{tprior2}),
the over-clustering issue is greatly alleviated.

\begin{figure}[htb]
	\begin{center}
		\includegraphics[width=12cm]{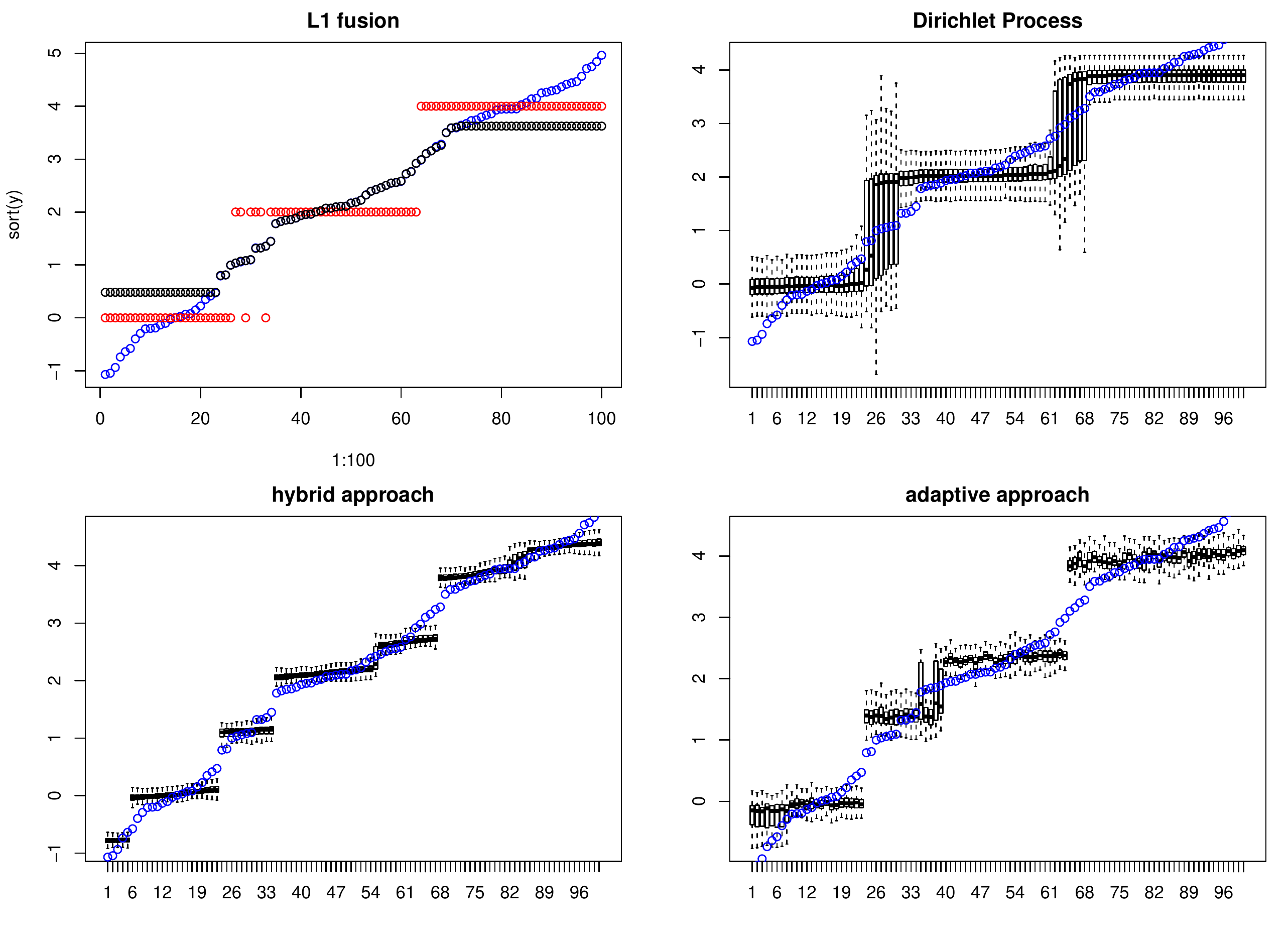}\vskip -0.3in
		\caption{
			Toy example comparison among difference approaches.
			Upper Left: Frequentist fusion estimation (\ref{fl1});
			Upper right: Marginal posterior boxplots for each $\theta_i$ under Laplace process prior; Lower left: Marginal posterior boxplots for each $\theta_i$
			under t prior (\ref{gibbs2}); Lower right: Marginal posterior boxplots under $t$ prior (\ref{gibbs3}).
			The blue points denote the sorted observed data $y_{(i)}$.
			The red  points denote the true $\theta^*$ corresponding to the data ascending order.
		}\label{toy3}
	\end{center}
\end{figure}

\section{Simulation and data anlaysis}\label{simu}

\subsection{ Bayesian fusion simulations}\label{simu1}

In our first simulation studies, we consider model 
(\ref{model}) with $\sigma^*=0.5$ and true parameter $\theta^*\in \BR^{100}$ having three consecutive blocks. These three blocks are
$\mB_1^* =\{1,\dots, b_1\}$, $\mB_{2}^*=\{b_1+1,\dots, b_1+b_2\}$, 
$\mB_{3}^*=\{b_1+b+2+1,\dots, b_1+b_2+b_3\}$ where
$(b_1,b_2,b_3)\sim \mbox{multinomial}(100, (1/3,1/3,1/3))$,
and $\theta_i^* =0$, 2 or 4 within each block respectively.
We compare the performance among Bayesian $t$ fusion (\ref{tprior}), Bayesian Laplace fusion (\ref{Lprior})
and frequentist $L_1$ fusion (\ref{fusest}), based on 100 replications. The choices of the hyperparameters are same to the toy example discussed in Section \ref{sectoy}.

To compare the accuracy of parameter estimations, we calculate the $L_2$ and $L_1$ estimation errors, $\|\widehat\theta-\theta^*\|_2$ and
$\|\widehat\theta-\theta^*\|_1$, for frequentist and Bayes estimator (posterior mean).
To assess the performance of block structure recovery, we consider several 
comparison criteria. We define the  ``adjacency" matrix as $\Sigma=(\omega_{ij})=(1\{\theta_i=\theta_j\})$, and corresponding 
estimation error of $\Sigma$: $R=\|\Sigma^*-\widehat\Sigma\|_1=\sum_{i,j}
|\widehat\omega_{ij}-\omega_{ij}^*|$.
For $L_1$ fusion estimator, $\widehat\Sigma$ is trivially estimated by
$\widehat\omega_{ij}=1\{\widehat\theta^{\rm F}_i=\widehat\theta^{\rm F}_j\}$.
For Bayesian shrinkage approaches, since their posterior samples are continuous, it is necessary to ``sparsify'' the continuous posterior in order 
to retrieve a sparse structure estimation for $\Omega$.
In the literature, this is usually done by 1) hard thresholding
\cite{LiP2017,TangXGG2016, CarvalhoPS2010, IshawaranR2005}, 
or 2) decoupling shrinkage and selection methods \cite{HahnC2015,XuSMQH2017}.
All these mentioned approaches depend solely on the posterior scaling, and donot take into account of the degree of prior concentration. Therefore, following the suggestion made in \cite{SongL2017}, we estimate
$\widehat\omega_{ij}=1\{|\widehat\theta_i-\widehat\theta_j|/\widehat\sigma
\leq \pi_{1/2n}\}$, where
$\widehat\theta_i$ and $\widehat\sigma$ are the Bayes estimator, 
$\pi_{1/2n}$ denotes the $(1-1/2n)$ quantile of the
prior distribution imposed on successive difference $\vartheta_i/\sigma$.
This choice tries to increases the robustness of Bayesian inference: if an
overly small or large scale parameter $s$ is used, the estimation for $\omega_{i,j}$ will adapt accordingly.
For $t$ shrinkage (\ref{tprior}), $\pi_{1/2n}=st_{1/2n}$ where $s$ is the scale parameter, $t_{1/2n}$
is the $1-1/2n$ quantile of $t$ distribution with df$=2a_t$;
for Laplace shrinkage (\ref{Lprior}), $\pi_{1/2n}=\log(n)/\lambda$.
Since different priors have different corresponding $\pi_{1/2n}$ values, and comparison solely based on the value of $R$ may not be completely fair. Hence, we also consider the following two statistics:
$W=\mbox{Average}_{\{w_{ij}^*\neq0\}}|\widehat \theta_i-\widehat\theta_j|$ denotes within-block average variation, and
$B=\min_{\{w_{ij}^*=0\}}|\widehat \theta_i-\widehat\theta_j|$ denotes 
the between-block separation.
A larger $B$ and smaller $W$ indicate a better block identification performance. 
The results are summarized in Table \ref{T1}.

\begin{table}[ht]
\caption{Comparison among Bayesian t-fusion, Bayesian Laplace fusion, and $L_1$ fusion. 
The report are based on average results from 100 replications. Refer to Section \ref{simu1} for the definition of $R$, $W$ and $B$.}\label{T1}
\begin{center}
\begin{tabular}{cccc}
 \hline
Methods                          & Bayesian t-fusion   & Bayesian Laplace fusion    &$L_1$ fusion\\\hline
$L_2$ error of $\theta$          &1.3243           &2.2323            &1.5916  \\ 
Standard error                   &0.0602           &0.0324            &0.0421  \\ \hline
$L_1$ error of $\theta$          &9.6964           &16.6296           &11.6745  \\ 
Standard error                   &0.3419           &0.2725            &0.3407  \\\hline
\hline
$W$ (the smaller the better)     &0.05584          &0.2182            &0.2128 \\ 
Standard error                   &0.0028           &0.0037            &0.0077\\\hline
$B$ (the larger the better)     &1.4243           &0.6792            &1.1302 \\ 
Standard error                   &0.0684           &0.0211            &0.0398 \\\hline
$R$ (the smaller the better)     &387.32           &85.6              &1360.9    \\ 
Standard error                   &32.0628          &6.2375            &53.2685    \\\hline
\hline                                                     
\end{tabular}
\end{center}
\end{table}

The simulation results show that the Bayesian $t$ fusion yields the most accurate estimation in terms of both $L_2$ and $L_1$ errors. This Bayes estimator also induces the best clustering result, as it has the largest between-group separation and smallest within-group variation.
In comparison, the frequentist fusion estimator has a worse accuracy. It is well known that $L_1$ penalty introduces estimation bias
\cite{FanL2001}, but unlike the LASSO penalty which introduces a bias of $\lambda$, fuse-$L_1$ penalty introduces only a bias as small as $\lambda/n_i$ (\cite{rinaldo2009properties}) where $n_i$
is the number of elements in each block.
Hence, its estimation performance is still acceptable.
The $L_1$ fusion also has a much larger $R$ statistics. This is 
consistent to observation from our toy example, that $L_1$ fusion has
a mild over-partition issue.
Bayesian Laplace fusion, on the other hand, has a much worse estimation behavior due to its insufficient shrinkage effect. As discussed in Section 
\ref{sectoy}, the Laplace fusion tends to yield smooth changing $\widehat\theta_i$'s, thus it has a much smaller between-block separation.
Note that Laplace shrinkage does obtain the smallest $R$ statistic,
but this doesn't imply that it has a good performance on structure recovery.
It has a small $R$ statistics because the Laplace-fusion prior doesn't have strong prior concentration and its corresponding $\pi_{1/2n}$ is much larger than $t$ fusion. 

\subsection{ Bayesian clustering simulations}\label{simu2}
In our second simulation studies, we consider model 
(\ref{model}) with $\sigma^*=0.5$ and true parameter $\theta^*\in \BR^{100}$ having three unknown clusters.
With equal probability, $\theta_i^*=0$, 2, or 4. 
We aim to compare different approaches including Bayesian $t$ modeling (\ref{gibbs2}) and (\ref{gibbs3}),
Dirichlet process prior (\ref{dprior}) and the frequentist bCARDS estimation (\ref{fl1}) using $L_1$ fusion \cite{ke2015homogeneity}.
Besides the comparison of $L_2$ and $L_1$ errors of the Bayesian/frequentist estimator, we also compare the posterior mean of squared $L_2$ error,
$E_{\pi(\theta|y)}\|\theta-\theta^*\|^2$, among the Bayesian approaches.
Note that the $L_2$ error of Bayes estimator only tells how good is the posterior center, while the posterior mean of squared $L_2$ error tells how 
good is distributional posterior contraction.

Note that the simulation setting seems similar to our previous study in Section \ref{simu1}, but as mentioned in Section \ref{secada}, clustering problem is much more difficult than fusion problem. 
To see that, assume there are three balanced clusters for $\theta$, let $z_1$ be the largest
observation in the 0's cluster and $z_2$ be the smallest observation in the 2's cluster.
By the well known result \cite{royston1982algorithm}, $E(z_1)\approx 1.04 > 0.96 \approx E(z_2)$. Therefore, misclassification will always happen for those extreme observations, and 
$B=\min_{\{w_{ij}^*=0\}}|\widehat \theta_i-\widehat\theta_j|$ is always around 0 for all methods, therefore, the comparison of $B$ statistic values are meaningless.
Because of that, to assess how well these methods identify and separate unknown clusters, we re-define the $B$ statistic as 
\[\tilde B=\mbox{the bottom 10\% quantile of }\{|\widehat \theta_i-\widehat\theta_j|\}_{\{w_{ij}^*=0\}}.\]
The simulation results are summarized in Table \ref{T2}.

\begin{table}[ht]
\caption{Comparison among Bayesian t-shrinkage, Bayesian DP prior, and $L_1$ fusion. 
The report are based on average results from 100 replications. Refer to Section \ref{simu2} for the definition of $R$, $W$ and $\tilde B$.}\label{T2}
\begin{center}
\begin{tabular}{cccccc}
 \hline
Methods                          &$t$-shrinkage (\ref{gibbs3})  & $t$-shrinkage (\ref{gibbs2})  
&DP prior(\ref{dprior}) &$L_1$ fusion (\ref{fl1}) \\\hline
posterior mean of $L_2^2$ error         &20.559           &20.5660      & 40.7029   &--- \\ 
Standard error                   &0.6885           &0.4589       & 3.1209    &---  \\ \hline
$L_2$ error of $\theta$          &4.1873           &4.3925       &  3.9584   &5.0039  \\ 
Standard error                   &0.0872           &0.0531       &  0.1621   &0.0739  \\ \hline
$L_1$ error of $\theta$          &28.1099          &33.4543      & 26.8122   &43.8450  \\ 
Standard error                   &0.7385           &0.4845       &  1.7122   &0.7909  \\\hline
\hline
$W$ (the smaller the better)     &0.3517           &0.4428       & 0.2183    &0.4551 \\ 
Standard error                   &0.0093           &0.0063       & 0.0061    &0.0062\\\hline
$\tilde B$ (the larger the better)      &1.4652           &1.3392       & 1.3885    &0.9855 \\ 
Standard error                   &0.0218           &0.0147       & 0.0419    &0.0146 \\\hline
$R$ (the smaller the better)     &1752.24          &2294.76      &  ---      &1973.7   \\ 
Standard error                   &31.5641          &14.6102      &  ---      &11.9937   \\\hline
\hline                                                     
\end{tabular}
\end{center}
\end{table}

The comparison shows that DP prior yields the best point estimator in terms of estimation error, and the adaptive $t$-shrinkage method (\ref{gibbs3}) gives a slightly worse point estimator. 
But in terms of the posterior contraction, both $t$-shrinkage approaches have much smaller posterior mean of squared $L_2$ error.
This result is consistent to the insight obtained from the toy example in Section \ref{secada} that LP prior leads to larger posterior variance.
The three Bayesian approaches have approximately the same performance for between-cluster
separation ($\tilde B$ statistic), while the DP prior has a smaller
within-cluster variation ($W$ statistic).
The frequentist estimator (\ref{fl1}) has the worst performance in almost every aspect.
In summary, DP prior does yield the best Bayes point estimation, but the adaptive $t$-shrinkage (\ref{gibbs3}) has the best Bayesian contraction, and 
its posterior mean, as a point estimator, has reasonable performance for estimation and  clustering structure recovery.
Furthermore, by comparing the results between $t$-shrinkage method (\ref{gibbs3}) and (\ref{gibbs2}), we conclude that the $r$-update step implemented in algorithm (\ref{gibbs3})
does improve the performance of $t$-shrinkage in every aspect for this clustering problem.

\subsection{ Real data set analysis}\label{simu3}
In this section, we consider a real comparative genomic hybridization
(CGH) dataset \cite{tibshirani2007spatial}. The dataset is available from the R package \verb$cghFLasso$ and it contains n=990 observations. 
We are interested in a fusion estimation for the mean trend in order to detect the “hot-spot” region, i.e., the corresponding genes have extra DNA copies.
We apply both Bayesian $t$-shrinkage fusion (\ref{tprior}) and frequentist
$L_1$ fusion (\ref{fusest}).
For the Bayesian application, the hyperparameter is chosen as 
$a_t=2$ and $b_t=0.0007$, following the suggestion of (\ref{tune}).
Both point estimations are plotted in Figure \ref{real}.
Frankly speaking, the two estimations share similar pattern, and it is difficult to judge which one is definitely better.
To see the slight advantage of $t$-shrinkage, a closer comparison is displayed in the same figure as well, where we compare the two estimators within the segment [80,120].
It is obvious that both approaches induce exactly the same blocking structure. But the difference is that, for $t$-shrinkage prior, the $\widehat\theta_i$ within each block is very close to the sample mean of the block; but for $L_1$ fusion, there is a large bias between 
$\widehat\theta_i$ and the sample mean of the block.

\begin{figure}[htb]
	\begin{center}
		\includegraphics[width=12cm]{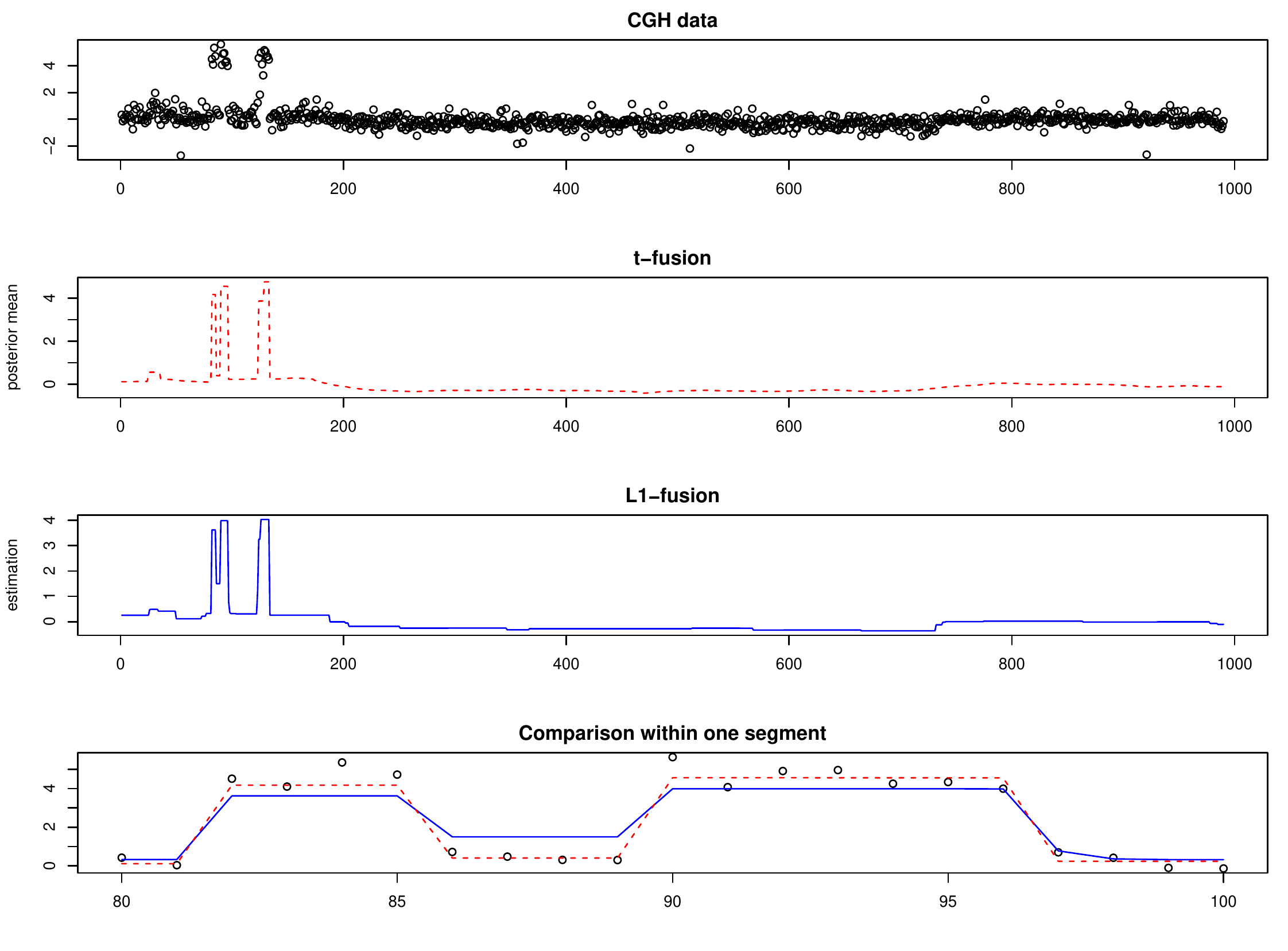}\vskip -0.3in
		\caption{
			Real data application: 1) CGH data; 2) posterior mean of $t$
			shrinkage fusion; 3) frequentist $L_1$ fusion estimator;
			4) Comparison of the two approaches within a segment of [80,120].
		}\label{real}
	\end{center}
\end{figure}

\section{Final remarks}\label{end}
In this work, we study the Bayesian inference for a vector parameter $\theta$ which has a unknown blocking structure, using Bayesian shrinkage prior. 
For the ease of representation, this work focuses on the application of $t$-fusion prior, but our theorems actually holds for any polynomially decaying prior distribution. We demonstrate that a simple $t$-fusion prior
leads to satisfactory posterior contraction, and is powerful to recover the blocking structure. We recommend not to use 
Laplace fusion prior since it can only induce a smoothly varying posterior
estimation.
Although this work mainly focuses on the normal sequence models, but
the presented $t$-fusion modeling and the posterior asymptotic results can be easily extended to more complicated regression models such as $y=X\beta+Z\theta+\epsilon$ for some blocky parameter $\theta$. 

We also extend the use of shrinkage prior to a more general clustering problem. To the best of authors' knowledge, this is the first attempt in literature to use Bayesian shrinkage to recover unknown clustering structure.
The basic idea is to find a pilot order $\widehat r$, and then fuse all pairs of neighbors (that are determined by $\widehat r$) via shrinkage priors. An adaptive $r$-update step is further incorporated to improve the clustering performance.
Simulations show that the proposed algorithms (\ref{gibbs2}) and (\ref{gibbs3}) have reasonable performance, and their posterior contraction is much better than conventional Dirichlet process modeling. Further theoretical investigation is necessary to understand their asymptotic behaviors. 
In practice, this idea can be generalized to clustering problem of multi-dimensional data as well, i.e., $y_i=\theta_i\in\BR^d$.
Since we can not rank multivariate vectors, the possible alternative is to construct a minimum spanning tree (MST)\cite{li2018spatial}, and impose shrinkage fusion prior on the pair of $\theta_i$ that is connected by a edge in the MST.

\section*{Acknowledgement}
This work is in memory of Prof. Jayanta Ghosh who has jointly supervised the first PhD student of the second author.
Qifan Song's research is sponsored by NSF DMS-1811812.
Guang Cheng's research is sponsored by NSF DMS-1712907, DMS-1811812, DMS-1821183, and
Office of Naval Research, (ONR N00014-18-2759).

\appendix
\section{Appendix}
First, let us state some useful lemmas.
\begin{lemma}[Lemma 1 of \citep{LaurentM2000}]\label{chi}
	Let $\chi^2_{d}(\kappa)$ be a chi-square distribution with degree of freedom $d$, and noncentral parameter $\kappa$, then we have the following concentration inequality
	\[\begin{split}
	&Pr( \chi^2_{d}(\kappa)>d+\kappa+2x+\sqrt{(4d+8\kappa)x})\leq \exp(-x), \mbox{ and}\\
	&Pr( \chi^2_{d}(\kappa)<d+\kappa-\sqrt{(4d+8\kappa)x})\leq \exp(-x).
	\end{split}\]
\end{lemma}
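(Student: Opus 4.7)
The plan is to recognize this as the classical Laurent--Massart concentration inequality for noncentral chi-square random variables, which is proved via the Cram\'er--Chernoff method applied to the moment generating function. Since the statement is cited verbatim as Lemma 1 of \cite{LaurentM2000}, the cleanest route is to invoke their proof directly; below I sketch the self-contained derivation one would reproduce.

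First I would represent $X\sim\chi^2_d(\kappa)$ as $\sum_{i=1}^d Z_i^2$ with $Z_i\sim N(\mu_i,1)$ independent and $\sum_{i=1}^d \mu_i^2=\kappa$, and compute the MGF in closed form: for $0<t<1/2$,
\[
E[\exp(tX)] \;=\; (1-2t)^{-d/2}\exp\!\left(\frac{\kappa t}{1-2t}\right),
\]
so that $\log E[\exp(tX)] = -\tfrac{d}{2}\log(1-2t) + \tfrac{\kappa t}{1-2t}$.

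For the upper tail, the Chernoff bound gives $P(X\geq u)\leq \exp(-tu)\,E[\exp(tX)]$. The key step is to reduce the log-MGF to a single quadratic-in-$t$ correction term using the elementary bounds
\[
-\log(1-2t)\;\leq\; 2t+\frac{2t^2}{1-2t},\qquad \frac{t}{1-2t}\;\leq\; t+\frac{2t^2}{1-2t},
\]
valid for $0<t<1/2$, which combine to yield
\[
\log P(X\geq u)\;\leq\; -t\bigl(u-d-\kappa\bigr)+\frac{(d+2\kappa)\,t^2}{1-2t}.
\]
Setting $u=d+\kappa+2x+\sqrt{(4d+8\kappa)x}$ and choosing $t$ so that $t/(1-2t)$ matches $\sqrt{x/(d+2\kappa)}$ up to a factor $1/\sqrt{2}$ would produce the desired bound $\exp(-x)$ after straightforward simplification. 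The lower-tail inequality follows from the symmetric argument with negative exponential tilt: use $P(X\leq u)\leq \exp(tu)\,E[\exp(-tX)]$ for $t>0$, obtain the analogue
\[
\log P(X\leq u)\;\leq\; -t\bigl(d+\kappa - u\bigr) + (d+2\kappa)\,t^2,
\]
(no singularity arises because $1+2t\geq 1$), and optimize to recover $\exp(-x)$ with $u=d+\kappa-\sqrt{(4d+8\kappa)x}$.

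The principal technical obstacle is a bookkeeping one: combining the $d$-dependent and $\kappa$-dependent pieces of the log-MGF into a clean $(d+2\kappa)$ factor that makes the Chernoff optimization transparent, while keeping the optimal $t$ strictly below $1/2$ so that the MGF remains finite. This is precisely the content of the calculations in \cite{LaurentM2000}, which I would cite rather than reproduce in detail.
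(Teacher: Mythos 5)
Your proposal is correct: the paper itself offers no proof of this lemma, simply citing Laurent--Massart (2000), and your Cram\'er--Chernoff sketch via the noncentral chi-square MGF $(1-2t)^{-d/2}\exp(\kappa t/(1-2t))$, with the log-MGF bounded by $(d+\kappa)t+(d+2\kappa)t^2/(1-2t)$ for the upper tail (and its $t\mapsto -t$ analogue for the lower tail), is exactly the standard argument behind the cited result. The optimizations you indicate do close, since $\sqrt{(4d+8\kappa)x}=2\sqrt{(d+2\kappa)x}$ puts the bound in the usual sub-gamma form with variance factor $2(d+2\kappa)$ and scale $2$, so deferring the remaining bookkeeping to the citation is appropriate.
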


\begin{lemma}[Theorem 1 of \citep{ZubkovS2013}]\label{lemmad}
        Let $X$ be a Binomial random variable $X\sim \mbox{B}(n,v)$. For any $1<k<n-1$
	\[ 
	Pr(X\geq k+1)\leq 1- \Phi(\mbox{sign}(k-nv)\{2nH(v, k/n)\}^{1/2}),
	\]
	where $\Phi$ is the cumulative distribution function of standard Gaussian distribution and
	$H(v, k/n)= (k/n)\log(k/nv)+(1-k/n)\log[(1-k/n)/(1-v)]$.
\end{lemma}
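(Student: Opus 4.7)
The plan is to reduce the inequality to a one-variable monotonicity statement that can be checked by differentiating in $v$. First, I would dispose of the trivial case: when $k\leq nv$ the quantity $\mathrm{sign}(k-nv)\sqrt{2nH(v,k/n)}\leq 0$, so the right-hand side is at least $1-\Phi(0)=1/2$, and the bound follows from $P(X\geq k+1)\leq 1$ together with the usual median-type fact for binomials. So all the work is in the regime $v<k/n$, where $H(v,k/n)>0$ by Gibbs' inequality.

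Set $F_n(v)=P(B(n,v)\geq k+1)$ and $G_n(v)=1-\Phi(\sqrt{2nH(v,k/n)})$, viewed as functions of $v\in(0,k/n]$. At the endpoint $v=k/n$ one has $G_n(k/n)=1/2$, and the binomial median inequality gives $F_n(k/n)\leq 1/2$; this establishes $F_n(k/n)\leq G_n(k/n)$. It then suffices to prove the derivative comparison $F_n'(v)\leq G_n'(v)$ on $(0,k/n)$, because both functions are continuous and the desired inequality at $v$ follows by integrating from $v$ to $k/n$. The derivative $F_n'(v)$ collapses via the standard telescoping identity for binomial tails to
\[
F_n'(v)=n\binom{n-1}{k}v^{k}(1-v)^{n-k-1},
\]
which is exactly the $B(n-1,v)$ point mass at $k$ times $n$. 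The derivative $G_n'(v)$ is computed by the chain rule and equals
\[
G_n'(v)=\frac{1}{\sqrt{2\pi}}\exp\!\Bigl(-nH(v,k/n)\Bigr)\cdot\frac{k-nv}{\sqrt{2nH(v,k/n)}\,v(1-v)}.
\]

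So the whole inequality reduces to a purely analytic pointwise estimate: one must verify that
\[
\binom{n-1}{k}v^{k}(1-v)^{n-k-1}\leq \frac{1}{\sqrt{2\pi}}\exp\!\Bigl(-nH(v,k/n)\Bigr)\cdot\frac{k-nv}{n\,\sqrt{2nH(v,k/n)}\,v(1-v)}.
\]
After applying Stirling's formula with explicit two-sided bounds to $\binom{n-1}{k}$, the exponential factor $\exp(-nH(v,k/n))$ on both sides cancels exactly, and the remaining inequality becomes a comparison between the Stirling prefactor $\sim 1/\sqrt{2\pi n(k/n)(1-k/n)}$ and the Gaussian-tail prefactor $(k-nv)/(n\sqrt{2nH}\,v(1-v))$. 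This in turn reduces to a one-variable inequality between $H(v,k/n)$ and a quadratic in $(k/n-v)/\sqrt{v(1-v)}$, which can be verified by a Taylor expansion of $H$ around its zero at $v=k/n$ combined with the convexity of $H(\cdot,k/n)$.

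The main obstacle is controlling the Stirling remainder so that it points in the correct direction. The classical Chernoff route would discard the $1/\sqrt{nH}$ prefactor and merely yield $\exp(-nH)$; the Zubkov–Serov sharpening requires that the subleading Stirling term and the subleading term in the Taylor expansion of $H$ conspire to give a \emph{one-sided} inequality. Getting the sign right boils down to showing that the ratio $\sqrt{2H(v,k/n)}/(k/n-v)$ is dominated in the right way by $1/\sqrt{v(1-v)}$ uniformly in the regime $v<k/n$, which I would verify by direct manipulation of the KL expression together with the standard inequality $H(v,p)\geq 2(p-v)^{2}$ (Pinsker) to bound the ratio from one side, and a reverse comparison near $v=k/n$ from the other.
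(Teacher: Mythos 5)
The paper never proves this lemma: it is quoted as Theorem 1 of Zubkov and Serov (2013) and used as an external ingredient, so your attempt must stand entirely on its own, and as written it does not. Three concrete problems. First, the case $k\leq nv$ is not trivial: there the right-hand side equals $\Phi(\{2nH(v,k/n)\}^{1/2})$, which is strictly less than $1$, so neither $P(X\geq k+1)\leq 1$ nor a median fact disposes of it; this half of the statement is an honest lower bound on the binomial distribution function at $k$ and needs its own argument. Second, your integration step runs the wrong way: if $F_n'\leq G_n'$ on $(v,k/n)$, integrating from $v$ to $k/n$ yields $G_n(v)-F_n(v)\leq G_n(k/n)-F_n(k/n)$, which gives nothing even though the right side is nonnegative. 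To make a derivative comparison deliver $F_n\leq G_n$ you must either anchor at $v=0$, where both $F_n$ and $G_n$ vanish (and then $F_n'\leq G_n'$ is indeed the inequality you want), or prove $F_n'\geq G_n'$ and integrate up to $k/n$ --- and the latter is false near $v=0$, where $G_n'(v)\sim Cv^{k-1}/\sqrt{\log(1/v)}$ while $F_n'(v)\sim C'v^{k}$.

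Third, and most seriously, the pointwise estimate you reduce to is exactly where the entire difficulty of the theorem lives, and the tools you propose point the wrong way. Writing $p=k/n$ and substituting $e^{-nH}=(v/p)^{k}((1-v)/(1-p))^{n-k}$, the claim $F_n'\leq G_n'$ collapses to $\binom{n-1}{k}p^{k}(1-p)^{n-k}\leq (2\pi)^{-1/2}\,(p-v)/\bigl(v\sqrt{2nH(v,p)}\bigr)$, and the two sides agree to first order as $v\to p^{-}$ (both behave like $(2\pi)^{-1/2}\sqrt{(1-p)/(np)}$), so the inequality is decided by second-order terms. Pinsker's bound $H\geq 2(p-v)^{2}$ is a \emph{lower} bound on $H$ and hence an \emph{upper} bound on the right-hand side; it can only work against you here. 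The reverse, $\chi^{2}$-type bound $H\leq (p-v)^{2}/(v(1-v))$ is too lossy exactly at $v=p$ where the inequality is tight. So the one step you defer to ``direct manipulation together with Pinsker'' is precisely the content of the Zubkov--Serov theorem, and the sketch supplies no viable route to it. The skeleton (a monotonicity argument in $v$ comparing the binomial tail derivative $n\binom{n-1}{k}v^{k}(1-v)^{n-k-1}$ with the derivative of the Gaussian surrogate) is a reasonable one, but the case split, the integration direction, and the crucial sharp prefactor comparison all need to be repaired before this counts as a proof.
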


The next lemma is a refined result of Lemma 6 in \citep{Barron1998}:
\begin{lemma}\label{lemmab}
	Let $f^*$ be the true probability density of data generation, $f_{\theta}$ be the likelihood function
	with parameter $\theta\in\Theta$, and $E^*$, $E_\theta$ denote the corresponding expectation
	respectively. Let $B_n$ and $C_n$ be two subsets in parameter space $\Theta$, and $\phi_n$ be some
	test function satisfying 
	$\phi_n(D_n)\in[0,1]$ for any data $D_n$. If $\pi(B_n)\leq b_n$, $E^*\phi(D_n)\leq b_n'$,
	$\sup_{\theta\in C_n}E_{\theta}(1-\phi(D_n))\leq c_n$, and  furthermore,  
	\[P^*\left\{\frac{m(D_n)}{f^*(D_n)}\geq a_n \right\}\geq 1-a_n',\]
	where $m(D_n)=\int_{\Theta} \pi(\theta)f_{\theta}(D_n)d\theta$ is the margin probability of $D_n$. Then,
	\[
	\begin{split}
	&E^*\left(\pi(C_n\cup B_n)|D_n)\right)\leq \frac{b_n+c_n}{a_n}+a_n'+b_n'.
	\end{split}
	\]
\end{lemma}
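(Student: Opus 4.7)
The plan is to prove the lemma by a standard decomposition of the expected posterior mass into three pieces: a test-based piece, an event where the marginal likelihood is too small, and the main piece on which a change-of-measure argument can be applied. I would write
\[
E^*\bigl(\pi(C_n\cup B_n\mid D_n)\bigr)=E^*\bigl(\pi(C_n\cup B_n\mid D_n)\phi_n\bigr)+E^*\bigl(\pi(C_n\cup B_n\mid D_n)(1-\phi_n)\bigr)
\]
and bound the first summand trivially by $E^*\phi_n\le b_n'$ using $\pi(\cdot\mid D_n)\le 1$.

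For the second summand I would introduce the event $A_n=\{m(D_n)/f^*(D_n)\ge a_n\}$ and split further:
\[
E^*\bigl(\pi(C_n\cup B_n\mid D_n)(1-\phi_n)\bigr)\le E^*\bigl(\pi(C_n\cup B_n\mid D_n)(1-\phi_n)\mathbf{1}_{A_n}\bigr)+P^*(A_n^c),
\]
where the second term is at most $a_n'$ by hypothesis. On $A_n$, rewriting the posterior as a ratio gives
\[
\pi(C_n\cup B_n\mid D_n)=\frac{\int_{C_n\cup B_n}\pi(\theta)f_\theta(D_n)\,d\theta}{m(D_n)}\le \frac{1}{a_n}\int_{C_n\cup B_n}\pi(\theta)\frac{f_\theta(D_n)}{f^*(D_n)}\,d\theta.
\]

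After this inequality, the key step is a Fubini swap to push the $E^*$ inside the prior integral, converting $E^*[(1-\phi_n)f_\theta/f^*]$ into $E_\theta[1-\phi_n]$ (a pointwise change of measure). This yields
\[
E^*\bigl(\pi(C_n\cup B_n\mid D_n)(1-\phi_n)\mathbf{1}_{A_n}\bigr)\le \frac{1}{a_n}\int_{C_n\cup B_n}\pi(\theta)E_\theta(1-\phi_n)\,d\theta.
\]
I would then split the integration domain: on $C_n$ the hypothesis gives $E_\theta(1-\phi_n)\le c_n$, contributing at most $c_n\pi(C_n)\le c_n$; on $B_n$ I only use the trivial bound $E_\theta(1-\phi_n)\le 1$, contributing at most $\pi(B_n)\le b_n$. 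Summing the three pieces produces $b_n'+(b_n+c_n)/a_n+a_n'$, exactly the claimed bound.

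There is no real ``hard part'' here — the proof is essentially bookkeeping — but the one place where care is needed is the Fubini exchange, which requires that the integrand $\pi(\theta)(1-\phi_n)f_\theta/f^*$ be nonnegative (it is, since $0\le\phi_n\le 1$) so that Tonelli applies without integrability side conditions. A minor subtlety is that dividing by $f^*(D_n)$ is only defined where $f^*(D_n)>0$, but $P^*$-almost surely this holds, so the change of measure is valid and the argument goes through.
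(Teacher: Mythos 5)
Your proposal is correct and follows essentially the same route as the paper's proof: the same three-way decomposition using the test $\phi_n$ and the event $\{m(D_n)/f^*(D_n)\ge a_n\}$, the same change-of-measure via Fubini/Tonelli to convert $E^*[(1-\phi_n)f_\theta/f^*]$ into $E_\theta(1-\phi_n)$, and the same split of the prior integral over $C_n$ (using the test bound $c_n$) versus $B_n$ (using the trivial bound and $\pi(B_n)\le b_n$). Your remarks on nonnegativity for Tonelli and on $f^*(D_n)>0$ holding $P^*$-a.s. are sensible points of care that the paper leaves implicit.
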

\begin{proof}
	Define $\Omega_n$ to be the event of $(m(D_n))/(f^*(D_n))\geq a_n$,
	and $m(D_n, C_n\cup B_n) = \int_{C_n\cup B_n} \pi(\theta)f_{\theta}(D_n)d\theta$. Then
	\[
	\begin{split}
	&E^*\pi(C_n\cup B_n)|D_n) = E^*\pi(C_n\cup B_n)|D_n)(1-\phi(D_n))1_{\Omega_n}\\
	+&E^*\pi(C_n\cup B_n)|D_n)(1-\phi(D_n))(1-1_{\Omega_n})+ E^*\pi(C_n\cup B_n)|D_n)\phi(D_n)\\
	\leq& E^*\pi(C_n\cup B_n)|D_n)(1-\phi(D_n))1_{\Omega_n}+E^*(1-1_{\Omega_n})+ E^*\phi(D_n)\\
	\leq& E^*\pi(C_n\cup B_n)|D_n)(1-\phi(D_n))1_{\Omega_n}+b_n'+a_n'\\
	\leq& E^*\{m(D_n, C_n\cup B_n)/a_nf^*(D_n)\}(1-\phi(D_n))+b_n'+a_n'.
	\end{split}
	\]
	
	By Fubini theorem,
	\[\begin{split}
	&E^*(1-\phi(D_n))m(D_n, C_n\cup B_n)/f^*(D_n) = \int_{C_n\cup B_n} \int_{\mX}[1-\phi(D_n)]f_\theta(D_n) dD_n\pi(\theta)d\theta\\
	\leq&\int_{C_n} E_{\theta}(1-\phi(D_n))\pi(\theta)d\theta+\int_{ B_n}  \int_{\mX}f_\theta(D_n) dD_n\pi(\theta)d\theta\leq b_n+c_n.
	\end{split}\]
	
	Combining the above inequalities leads to the conclusion.
\end{proof}
{\noindent \bf Proof of Theorem \ref{thm1} and \ref{thm2}}
\vskip 0.1in
Let $G=\{g_1,g_2,\dots,g_d\}$ be a generic subset of $\{2,\dots, n\}$, and it also represents potential a $(d+1)$-group structure of $\theta$
as $\{\{1,\dots, g_1\},\{ g_1+1,\dots, g_2\}, \dots, \{g_{d}+1,\dots, n \}\}$.
Given $G$ and its corresponding blocking structure, $\widehat\theta_G(y)$ denotes the estimation of $\theta$ based on block mean, i.e. $\widehat\theta_{G,j}(y) = \sum_{i=g_j+1}^{g_{j+1}}y_i/(g_{j+1}-g_j)$ for all $g_j+1\leq j\leq g_{j+1}$,
and $\widehat{\sigma}^2_G(y)=\|y-\widehat\theta_G\|^2/(n-|G|-1)$.

To prove the posterior contraction, we will apply Lemma \ref{lemmab}. We define the following testing function
\begin{equation}\begin{split}
    \phi(y) = 1\{&\|\widehat\theta_G-\theta^*\|\geq \sqrt{n}\sigma^*\epsilon_n\mbox{ and }
    |\widehat{\sigma}^2_G-\sigma^{*2}|>\sigma^{*2}\epsilon_n\\&\mbox
    { for all }G\supset G^*,
    |G|\leq (1+\delta)|G^*|\}
    \end{split}
\end{equation}
for some $\delta>0$, and define $C_n$ and $B_n$ as:
\[\begin{split}
&C_n=\{\theta: \|\theta-\theta^*\|\leq M\sqrt n\sigma^*\epsilon_n, (1-\epsilon_n)/(1+\epsilon_n)<\sigma^2/\sigma^{*2}<(1+\epsilon_n)/(1-\epsilon_n)\}^c\backslash B_n,\\
&B_n=\{\theta: \mbox{Among all }\{\vartheta_i's\}_{i\notin G^*} , \mbox{ there are at least }\delta|G^*|
 \mbox{ of them are greater than }\sigma\epsilon_n/n\}.
\end{split}\]

Note that when $G\supset G^*$, $\|\widehat\theta_G(y)-\theta^*\|^2\sim \sigma^{*2}\chi^2_{|G|+1}$ and $\|y-\widehat\theta_G\|^2\sim \sigma^{*2}\chi^2_{n-|G|-1}$, thus by Lemma \ref{chi},
we have that 
\[
P(\|\widehat\theta_G(y)-\theta^*\|\geq \sqrt{n}\sigma^*\epsilon_n\mbox{ and }
|\widehat{\sigma}^2_G-\sigma^{*2}|>\sigma^{*2}\epsilon_n)
\leq \exp\{-c_1n\epsilon_n^2\},
\]
for some constant $c_1$, since $|G|=O(|G^*|)\prec n\epsilon_n^2$ and $\epsilon_n\prec 1$.
Therefore,
\begin{equation}\label{test1}
    E_{(\theta^*,\sigma^{*2})}\phi(y)\leq {n-1 \choose (1+\delta)|G^*|}\exp\{-c_1n\epsilon_n^2\}=\exp\{-c_1'n\epsilon_n^2\},
\end{equation}
as long as $n\epsilon^2/[|G^*|\log n]$ is sufficiently large.

For any $(\theta,\sigma^2)\in C_n$ satisfying $\|\theta-\theta^*\|\leq M\sqrt n\sigma^*\epsilon_n$ and 
$\sigma^2/\sigma^{*2}\geq(1-\epsilon_n)/(1+\epsilon_n)$, we define $\widehat G=\{i:\theta_i-\theta_{i-1}\geq \sigma\epsilon_n/n^2\}\cup G^*$ ( hence $|G|\leq (1+\delta)|G^*|$), thus
\[
\begin{split}
 & P_{(\theta,\sigma^2)}(  \|\widehat\theta_{\widehat G}(y)-\theta^*\|\leq \sqrt{n}\sigma^*\epsilon_n)
  = P_{(\theta,\sigma^2)}(  \|\widehat\theta_{\widehat G}(y)-\widehat\theta_{\widehat G}(\theta) + \widehat\theta_{\widehat G}(\theta)-\theta^*\|\leq \sqrt{n}\sigma^*\epsilon_n)\\
 \leq&P_{(\theta,\sigma^2)}(\|\widehat\theta_{\widehat G}(y)-\widehat\theta_{\widehat G}(\theta)\| \geq \| \widehat\theta_{\widehat G}(\theta)-\theta^*\|-\sqrt{n}\sigma^*\epsilon_n) \\
 \leq& P_{(\theta,\sigma^2)}(\|\widehat\theta_{\widehat G}(y)-\widehat\theta_{\widehat G}(\theta)\| \geq M\sqrt{n}\sigma^*\epsilon_n-\sqrt{n}\sigma\epsilon_n-\sqrt{n}\sigma^*\epsilon_n) \\
 \leq&P\left(\chi_{|G+1|}^2\geq \left[\sqrt{\frac{1-\epsilon_n}{1+\epsilon_n}} (M-1)-1\right] n\epsilon_n^2\right)\leq \exp\{-c_2n\epsilon_n^2\}
\end{split}
\]
for some $c_2$ given a large $M$, where the second inequality is due to the fact that $\|\widehat\theta_{\widehat G}(\theta)-\theta\|\leq \sqrt{n}\sigma\epsilon_n$ when $\theta\in B_n$. 

For any $(\theta,\sigma^2)\in C_n$ satisfying $\sigma^2/\sigma^{*2}<(1-\epsilon_n)/(1+\epsilon_n)$
or $\sigma^2/\sigma^{*2}>(1+\epsilon_n)/(1-\epsilon_n)$,
\[
\begin{split}
 & P_{(\theta,\sigma^2)}(   |\widehat{\sigma}^2_G-\sigma^{*2}|<\sigma^{*2}\epsilon_n)
  = P_{(\theta,\sigma^2)}(   |\|y-\widehat\theta_G\|^2/\sigma^{*2}(n-|G|-1)-1|<\epsilon_n)\\
  \leq & P_{(\theta,\sigma^2)}( 1-\epsilon_n<  \|y-\widehat\theta_G\|^2/\sigma^{*2}(n-|G|-1)<1+\epsilon_n)\\
  \leq &P_{(\theta,\sigma^2)}\left( \bigg|\frac{\|y-\widehat\theta_G(y)\|^2}{\sigma^{2}}-(n-|G|-1)\bigg|>(n-|G|-1)\epsilon_n\right)\\
  =&P_{(\theta,\sigma^2)}\left( |\chi_{n-|G|-1}^2(\lambda)-(n-|G|-1)|>(n-|G|-1)\epsilon_n\right)\leq \exp\{-c_2'n\epsilon_n^2\}
\end{split}
\] for some $c_2'$, 
where the noncentral parameter $\lambda<n\epsilon_n^2\prec(n-|G|-1)\epsilon_n$.

Combining the results from the previous two paragraph, it is easy to obtain that 
\begin{equation}\label{test2}
     \sup_{(\theta,\sigma^{2})\in C_n}E_{(\theta,\sigma^{2})}[1-\phi(y)]\leq 
    \max\{\exp(-c_2n\epsilon_n^2),\exp(-c_2'n\epsilon_n^2)\}.
\end{equation}

Now we consider the marginal posterior density of data $y$. With probability $P(\|\varepsilon\|\leq2\sqrt n\sigma^*)$ (which converges to 1), 
\[\begin{split}
&\frac{m(y)}{f^*(y)}=\frac{\int_{\sigma^2}\int_\theta
\sigma^{*n}\exp\{-\|\theta^*-\theta+\varepsilon\|^2/\sigma^{2}\}
d\theta d\sigma^2
}{\sigma^{n}\exp\{-\|\varepsilon\|^2/\sigma^{*2}\}}\\
\geq&\int_{\sigma^2}\int_\theta\exp\left\{-\frac{\|\theta^*-\theta\|^2}{\sigma^2}
-2\frac{(\theta^*-\theta)^T\varepsilon}{\sigma^2}+\frac{\|\varepsilon\|^2}{\sigma^{*2}}-\frac{\|\varepsilon\|^2}{\sigma^{2}}-n\log(\sigma/\sigma^*)\right\}
\pi(\theta,\sigma^2)d\theta d\sigma^2\\
\geq& \pi(\max\{|\theta_1-\theta_1^*|, |\vartheta_i-\vartheta_i^*|\}/\sigma\leq  |G^*|\log n/n^2, 0\leq\sigma^2-\sigma^{*2}\leq \sigma^{*2}|G^*|\log n/n)\exp\{-c_3'|G^*|\log n \}
\end{split}\]
for some constant $c_3'$. Besides,
\[\begin{split}
&\pi(\max\{|\theta_1-\theta_1^*|, |\vartheta_i-\vartheta_i^*|\}/\sigma\leq  |G^*|\log n/n^2, 0\leq\sigma^2-\sigma^{*2}\leq\sigma^{*2} |G^*|\log n/n)\\
\geq&\pi_\sigma(\sigma^{*2})*O(\sigma^{*2}|G^*|\log n/n)* \underline\pi_\theta*O( |G^*|\log n/n^2)  \\
&*\underline\pi_\vartheta^{|G^*|} [ |G^*|\log n/n^2)]^{|G^*|} * [\pi_\vartheta(\{-|G^*|\log n/n^2, |G^*|\log n/n^2\})]^{n}\\
=&\exp\{-c_3''|G^*|\log n\}. \quad\mbox{ (by the conditions imposed on the prior specifications)}
\end{split}
\]for some constant $c_3''$. 
Thus 
\begin{equation}\label{md}
    m(y)/f^*(y)\geq \exp\{-(c_3'+c_3'')|G^*|\log n\}
    =\exp\{-c_3n\epsilon_n^2\}, \mbox{ with probability tending to 1}
\end{equation}
for some $c_3$, where $c_3$ can be sufficiently small when $n\epsilon_n^2/[|G^*|\log n]$ is large enough.

At last, we study the prior probability of set $B_n$. Due to the prior independence of $\vartheta_i'$s,
$\pi(B_n) = \pi[\mbox{Bin}(n-1-|G^*|,p)>(\delta)|G^*| ]$, where $p\leq (1/n)^{1+u}$. By Lemma \ref{lemmad},
\begin{equation}\label{pib}
    \pi(B_n) \leq \exp\{ -c_4\delta |G^*|\log n\}
\end{equation}
 for some $c_4$.
Combine results (\ref{test1}), (\ref{test2}), (\ref{md}) and (\ref{pib}), and we apply Lemma \ref{lemmab}
to get the posterior consistency result that 
\[
\pi(B_n\cup C_n|y) \rightarrow^p 0,
\]
given a sufficient large constants $\delta$ and $n\epsilon_n/|G^*|\log n$.
 
 \vskip 0.2in
{\noindent\bf Proof of Theorem \ref{pos1}}

Consider $y=\theta+d$, where $\theta_i^*\equiv 0$ for all $i$ and error $d$ is order statistics of standard normal variables, i.e. the density of $d$ is $f(d)=n!\prod \phi(d_i)1(d_1\leq d_2,\dots,d_{n-1}\leq d_n)$ and $\phi$ denotes the standard normal density. The prior of $\theta$ follows
$\pi(\theta)=\pi_1(\theta_1)\prod_{i=2}^{n}\pi_{t,s}(\theta_i-\theta_{i-1})$,
where $\pi_{\lambda_1}$ is the density of $N(0,\lambda_1)$, and $\pi_{t,s}$ is the density of $t$ distribution with tiny scale parameter satisfying $-\log s\asymp \log n$, i.e. conditions in Corollary \ref{thmt} holds, and we consider the misspecified posterior of form
$\pi(\theta|D_n)=\exp\{-(y-\theta)^2/2\}\pi(\theta)$.

Define $\mu\in\BR^n$ as $\mu_i=0$ for all $1\leq i\leq k= 3n/4$, and $\mu_i=Z_{0.25}/2$ for $ i> k$ where $Z_{0.25}$ is the right 25\% quantile of standard normal distribution, thus $\|\mu-\theta^*\|^2\asymp n$.

Let $\Delta\theta$ be any vector such that $\|\Delta\theta\|^2\leq M\log n$.
Then 
\[
-\log\left(\frac{\pi(\mu+\Delta\theta)}{\pi(\theta^*+\Delta\theta)}\right)
=-\log\left(\frac{\pi_{t,s}(Z_{0.25}/2+\Delta\theta_{k})}{\pi_{t,s}(\Delta\theta_{k})}\right) = O( -\log s) = O(\log n).
\]
And
\[\begin{split}
&\log\left(\frac{\exp\{-(y-\mu-\Delta\theta)^2/2\}}{\exp\{-(y-\theta^*-\Delta\theta)^2/2\}}\right)
=[(y-\theta^*-\Delta\theta)^2-(y-\mu-\Delta\theta)^2]/2\\
=&\frac{1}{2}\sum_{i=k+1}^n[(y_i-\Delta\theta_i)^2-(y_i-Z_{0.25}/2-\Delta\theta_i)^2]=\frac{1}{2}\sum_{i=k+1}^n[(y_i-\Delta\theta_i)Z_{0.25}-\frac{Z_{0.25}^2}{4}]\\
\geq &\frac{1}{2}\left[(\|y_{k+1:n}\|_1-\sqrt{nM\log n/4})Z_{0.25}-\frac{nZ_{0.25}^2}{16}  \right]\geq cn,
\end{split}
\]
for some positive constant $c$ given sufficiently large $n$, where the inequalities above hold
since $y_n\geq y_{n-1}\dots\geq y_{k+1}$, and $y_{k+1}\approx Z_{0.25}$ with high probability, due to large sample empirical quantile theory.

Combining the above two results, we have that the posterior density satisfies 
$\pi(\mu+\Delta\theta|D_n)\gg \pi(\theta^*+\Delta\theta|D_n)$ for any $\|\Delta\theta\|^2\leq M\log n$ with high probability. Therefore, more posterior mass is distributed within the $\sqrt{M\log n}$-radius ball centered at $\mu$ than at the true parameter $\theta^*$.
 
 \vskip 0.2in
{\noindent\bf Proof of Theorem \ref{thm3}}

 The proof of this theorem is quite similar to the proof of Theorem \ref{thm1} and \ref{thm2}.  We define the same testing function
 as in the proof of theorem \ref{thm1} and \ref{thm2}, and define the following two sets:
 \[\begin{split}
&C_n=\{\theta: \|\theta-\theta^*\|\leq M\sqrt n\sigma^*\epsilon_n, (1-\epsilon_n)/(1+\epsilon_n)<\sigma^2/\sigma^{*2}<(1+\epsilon_n)/(1-\epsilon_n)\}^c\backslash B_n,\\
&B_n=\{\theta: \mbox{Among all }\{\theta_i-\theta_{i-1}\}_{i=2}^n , \mbox{ there are at least }\delta
 \mbox{ of them are greater than }\sigma\epsilon_n/n\}.
\end{split}\]
Using the same arguments, one can still establish exponential separation results (\ref{test1}) and (\ref{test2}).

To establish (\ref{md}), we notice that 
\[\begin{split}
&\frac{m(y)}{f^*(y)}=\frac{\int_{\sigma^2}\int_\theta
\sigma^{*n}\exp\{-\|\theta^*-\theta+\varepsilon\|^2/\sigma^{2}\}
d\theta d\sigma^2
}{\sigma^{n}\exp\{-\|\varepsilon\|^2/\sigma^{*2}\}}\\
\geq&\int_{\sigma^2}\int_\theta\exp\left\{-\frac{\|\theta^*-\theta\|^2}{\sigma^2}
-2\frac{(\theta^*-\theta)^T\varepsilon}{\sigma^2}+\frac{\|\varepsilon\|^2}{\sigma^{*2}}-\frac{\|\varepsilon\|^2}{\sigma^{2}}-n\log(\sigma/\sigma^*)\right\}
\pi(\theta,\sigma^2)d\theta d\sigma^2\\
\geq& \pi(\max\{|\theta_i-\theta_i^*|\}/\sigma\leq \log n/n, 0\leq\sigma^2-\sigma^{*2}\leq \sigma^{*2}\log n/n)\exp\{-c_3'\log n \}
\end{split}\]
for some constant $c_3'$ and 
\[\begin{split}
&\pi(\max\{|\theta_i-\theta_i^*|\}/\sigma\leq \log n/n, 0\leq\sigma^2-\sigma^{*2}\leq\sigma^{*2} \log n/n)\\
\geq &\sum_{r}\pi(\max\{|\theta_{r(1)}-\theta_{r(1)}^*|, |\theta_{r(i)}-\theta_{r(i-1)}|\}/\sigma\leq  |G^*|\log n/n^2, 0\leq\sigma^2-\sigma^{*2}\leq\sigma^{*2} \log n/n|r)\pi(r).\\
\end{split}
\] 
This ensures (\ref{md}).

As for the prior probability of $B_n$,
if the scale parameter for the $t$ distribution is sufficiently small, i.e. $s=n^{-w}$ for some large $w$ and $\int_{\pm \epsilon_n/n^2}\pi_{t,s}(x)dx\geq 1-1/n^{1+u}$ 
for some sufficiently large $u$ where $\pi_{t,s}$ denotes the $t$ density
function with scale parameter $s$,
then for any ranking $r$,
\[
\pi(B_n|r)\geq 1-\pi(\max\{\theta_{r(i)}-\theta_{r(i-1)}\}\leq \sigma\epsilon_n/n^2 |r)
\geq 1-(1-1/n^{1+u})^n \approx n^{-u}.
\]
This hence implies that $-\log(\pi(B_n)) \geq u\log n $.

\vskip 0.2in
\bibliographystyle{plain}
\bibliography{ref}

\end{document}